\newcommand{\e}[1]{{\small $#1$}}
\newcommand{\algname}{\textsc{SpeechSplit }}
\newcommand{\algnamens}{\textsc{SpeechSplit}}
\newtheorem{theorem}{Theorem}
\newtheorem{lemma}{Lemma}[theorem]
\icmltitlerunning{Unsupervised Speech Decomposition via Triple Information Bottleneck}
\begin{document}

\twocolumn[
\icmltitle{
Unsupervised Speech Decomposition via Triple Information Bottleneck}



\icmlsetsymbol{equal}{*}

\begin{icmlauthorlist}
\icmlauthor{Kaizhi Qian}{equal,ibm,uiuc}
\icmlauthor{Yang Zhang}{equal,ibm}
\icmlauthor{Shiyu Chang}{ibm}
\icmlauthor{David Cox}{ibm}
\icmlauthor{Mark Hasegawa-Johnson}{uiuc}
\end{icmlauthorlist}

\icmlaffiliation{uiuc}{University of Illinois at Urbana-Champaign, USA}
\icmlaffiliation{ibm}{MIT-IBM Watson AI Lab, USA}

\icmlcorrespondingauthor{Yang Zhang}{yang.zhang2@ibm.com}
\icmlcorrespondingauthor{Kaizhi Qian}{kaizhiqian@gmail.com }

\icmlkeywords{Machine Learning, ICML}

\vskip 0.3in
]



\printAffiliationsAndNotice{\icmlEqualContribution} 

\begin{abstract}

Speech information can be roughly decomposed into four components: language content, timbre, pitch, and rhythm. Obtaining disentangled representations of these components is useful in many speech analysis and generation applications. Recently, state-of-the-art voice conversion systems have led to speech representations that can disentangle speaker-dependent and independent information. However, these systems can only disentangle timbre, while information about pitch, rhythm and content is still mixed together. Further disentangling the remaining speech components is an under-determined problem in the absence of explicit annotations for each component, which are difficult and expensive to obtain. In this paper, we propose \algnamens, which can blindly decompose speech into its four components by introducing three carefully designed information bottlenecks. \algname is among the first algorithms that can separately perform style transfer on timbre, pitch and rhythm without text labels. Our code is publicly available at {\small \url{https://github.com/auspicious3000/SpeechSplit}}.
\end{abstract}

\section{Introduction}
\label{sec:intro}

Human speech conveys a rich stream of information, which can be roughly decomposed into four important components: \emph{content}, \emph{timbre}, \emph{pitch} and \emph{rhythm}. The language content of speech comprises the primary information in speech, which can also be transcribed to text. Timbre carries information about the voice characteristics of a speaker, which is closely connected with the speaker's identity. Pitch and rhythm are the two major components of \emph{prosody}, which expresses the emotion of the speaker. Pitch variation conveys the aspects of the tone of the speaker, and rhythm characterizes how fast the speaker utters each word or syllable.

For decades, speech researchers have sought to obtain disentangled representations of these speech components, which are useful in many speech applications. In speech analysis tasks, the disentanglement of speech components helps to remove interference introduced by irrelevant components. In speech generation tasks, disentanglement is the foundation of many applications, such as voice conversion \cite{chou2019one}, prosody modification \cite{shechtman2019sequence}, emotional speech synthesis \cite{pell2011emotional}, and low bit-rate speech encoding \cite{schroeder1985code}, to name a few. 

Recently, state-of-the-art voice conversion systems have been able to obtain a speaker-invariant representation of speech, which disentangles the speaker-dependent information \cite{qian2019zero,chou2018multi,chou2019one}. However, these algorithms are only able to disentangle timbre. The remaining aspects, \emph{i.e.} content, pitch, and rhythm are still lumped together. As a result, the converted speech produced by these algorithms differs from the source speech only in terms of timbre. The pitch contour and rhythm remain largely the same.

From an information-theoretic perspective, the success in timbre disentanglement can be ascribed to the availability of a speaker identity label, which preserves almost all the information of timbre, such that voice conversion systems can `subtract' such information from speech. For example, \textsc{AutoVC} \cite{qian2019zero}, a voice conversion system, constructs an autoencoder for speech and feeds the speaker identity label to the decoder. As shown in figure \ref{fig:model}(a), by constructing an information bottleneck between the encoder and decoder, \textsc{AutoVC} can force the encoder to remove the timbre information, because the equivalent information is supplied to the decoder directly. It is worth noting that although the speaker identity is also correlated with the pitch and timbre information, the information overlap is relatively small, so the speaker identity cannot serve as labels for pitch and rhythm. If we had analogous information-preserving labels for timbre, rhythm or pitch, the disentanglement of these aspects would be straightforward, simply by utilizing these labels the same way voice conversion algorithms use the speaker identity label.

However, obtaining annotations for these other speech components is challenging. First, although language content annotation is effectively provided by text transcriptions, obtaining a large number of text transcriptions is expensive, especially for low-resourced languages. Therefore, here, we will focus on unsupervised methods that do not rely on text transcriptions. Second, the rhythm annotation, which is essentially the length of each syllable, can only be obtained with the help of text transcriptions, which are again unavailable under our unsupervised setting. Finally, for pitch annotation, although the pitch information can be extracted as pitch contour using pitch extraction algorithms, the pitch contour itself is entangled with rhythm information, because it contains the information of how long each speech segment is. Without the information preserving labels, disentangling content, rhythm and pitch becomes an under-determined problem. Hence, here we ask: is it possible to decompose these remaining speech components without relying on text transcriptions and other information-preserving labels?


In this paper, we propose \algnamens, a speech generative model that can blindly decompose speech into content, timbre, pitch, and rhythm, and generate speech from these disentangled representations. Thus, \algname is among the first algorithms that can enable flexible conversion of different aspects to different styles without relying on any text transcription. To achieve unsupervised decomposition, \algname introduces an encoder-decoder structure with three encoder channels, each with a different, carefully-crafted information bottleneck design. The information bottleneck is imposed by two mechanisms: first, a constraint on the physical dimension of the representation, which has been shown effective in \textsc{AutoVC}, and second, the introduction of noise by randomly resampling along the time dimension, which has been shown effective in \cite{polyak2019attention}. We find that subtle differences in the information bottleneck design can force different channels to pass different information, such that one passes language content, one passes rhythm, and one passes pitch information,  thereby achieving the blind disentanglement of all speech components. 

Besides direct value in speech applications, \algname also provides insight into a powerful design principle that can be broadly applied to any disentangled representation learning problem: in the presence of an information bottleneck, a neural network will prioritize passing through the information that cannot be provided elsewhere. This observation inspires a generic approach to disentanglement.

\section{Related Work}

\textbf{The Source-Filter Model} ~Early research on speech generation proposed the source-filter model \cite{quatieri2006discrete}, and many subsequent research efforts try to decompose speech into the source that includes pitch and the filter that includes content, using signal processing approaches, such as linear predictive coding \cite{atal1979predictive}, cepstral analysis \cite{mermelstein1976distance}, temporally stable power spectral analysis \cite{kawahara2008tandem} and probabilistic approaches \cite{zhang2014improvement}. However, these approaches do not consider the prosody aspects of speech.

\textbf{Voice Conversion} ~Inspired by the style transfer and disentanglement techniques in computer vision \citep{lample2017fader, kaneko2017parallel, choi2018stargan}, many approaches based on variational autoencoders (VAEs) and generative adversarial networks (GANs) have been proposed in the field of voice conversion to disentangle the timbre information from the speech. VAE-VC \cite{hsu2016voice} directly applies VAE for voice conversion, where the encoder produces a speaker-independent content embedding. After that, VAE-GAN \cite{hsu2017voice} replaces the decoder of VAE and a GAN when generating the converted speech to improve the quality of the conversion results. CDVAE-VC \cite{huang2018voice} uses two VAEs working on different speech features, one on STRAIGHT spectra \cite{kawahara2008tandem}, and one on mel-cepstral coefficients (MCCs), and encourages that the latent representation can reconstruct both features well. ACVAE-VC \cite{kameoka2018acvae} introduces an auxiliary classifier for the conversion outputs, and encourages the converted speech to be correctly classified as the source speaker. \citet{chou2018multi} introduced a classifier for the latent code, and discourages the latent code to be correctly classified as the target speaker. Inspired by image style transfer frameworks, \citet{gao2018voice} and \citet{kameoka2018stargan} adapted CycleGan \citep{kaneko2017parallel} and StarGan \citep{choi2018stargan} respectively for voice conversion. Later, CDVAE-VC was extended by directly applying GAN \cite{huang2020unsupervised} to improve the degree of disentanglement. \citet{chou2019one} used instance normalization to further disentangle speaker from content, and thus can convert to speakers that are not seen during training. StarGan-VC2 \cite{kaneko2019stargan} refined the adversarial framework by conditioning the generator and discriminator on the source speaker label, in addition to the target speaker label. Recently, \citet{qian2019zero} proposed \textsc{AutoVC}, a simple autoencoder based method that disentangles the timbre and content using information-constraining bottlenecks. Later, \citet{qian2020f0} fixed the pitch jump problem of \textsc{AutoVC} by F0 conditioning. Besides, the time-domain deep generative model is gaining more research attention for voice conversion \cite{niwa2018statistical,nachmani2019unsupervised,serra2019blow}.  However, these methods only focus on converting timbre, which is only one of the speech components.

\textbf{Prosody Disentanglement} 
There are recently many text-to-speech (TTS) systems that seek to disentangle the prosody information to generate expressive speech.
~\citet{skerry2018towards} introduced a Tacotron based speech synthesizer that can disentangle prosody from speech content by having an encoder that can extract the prosody information from the original speech. Mellotron \citep{valle2020mellotron} is a speech synthesizer conditional on both explicit prosody labels and latent prosody code to capture and disentangle different aspects of the prosody information. CHiVE \citep{kenter2019chive} introduces a hierarchical encoder-decoder structure that is conditioned on a set of prosodic features and linguistic features.
However, these TTS systems all require text transcriptions, which, as discussed, makes the task easier but limits their applications to high-resource language.
Besides TTS systems, Parrotron \cite{biadsy2019parrotron} disentangles prosody by encouraging the latent codes to be the same as the corresponding phone representation of the input speech. However, Parrotron still requires text transcriptions to label the phone representation, as well as to generate the synthetic parallel dataset. \citet{polyak2019attention} proposed, to the best of our knowledge, the only prosody disentanglement algorithm that does not rely on text transcriptions, which attempts to remove the rhythm information by randomly resampling the input speech. However, the effect of their prosody conversion is not very pronounced. In this paper, we would like to achieve effective prosody conversion without using text transcriptions, which is more flexible for low-resource languages.

\section{Background: Information in Speech}
\label{sec:background}

\begin{figure}[t]
\centering
\includegraphics[width=1\linewidth]{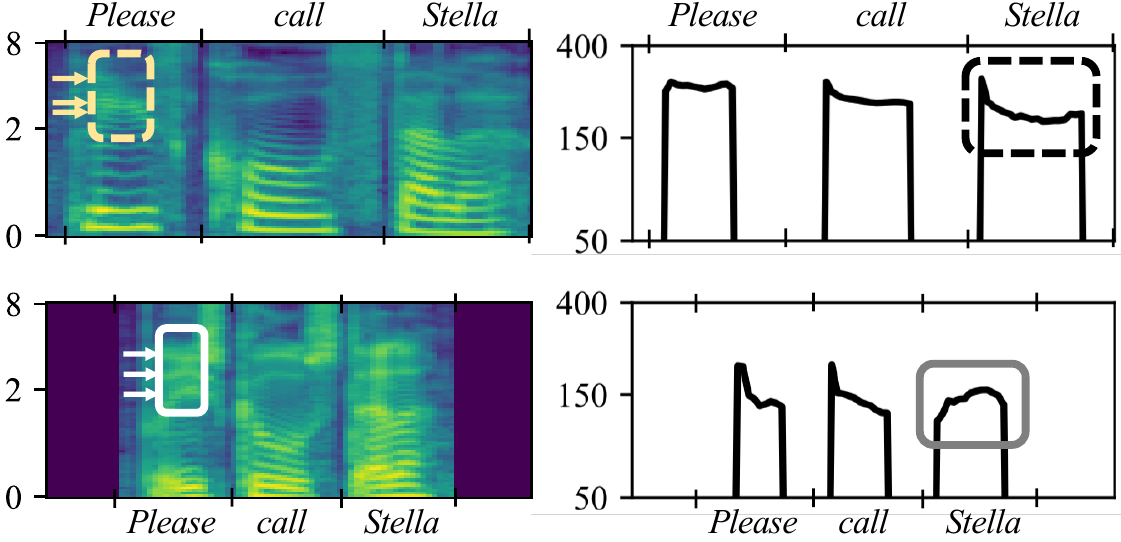}
\small
\caption{\small{Spectrograms (left) and pitch contours (right) of two utterances of the same sentence \textit{`Please call Stella'}. The left rectangle marks highlight the formant structures of the phone \textit{`ea'}. The arrows mark the frequencies of the second, third and fourth formants. The right rectangle marks highlight the pitch tones of the word \textit{`Stella'}. }}
\label{fig:background}
\end{figure}

Since this paper focuses on the decomposition of speech information into rhythm, pitch, timbre, and content, we provide here a brief primer on each of these components.  Figure~\ref{fig:background} shows the spectrograms (left) and pitch contours (right) of utterances of the sentence \textit{`Please call Stella'}. Throughout this paper, the term `spectrogram' refers to the magnitude spectrogram.

\textbf{Rhythm}  ~Rhythm characterizes how fast the speaker utters each syllable. As shown in figure~\ref{fig:background}, each spectrogram is divided into segments, which correspond to each word, as marked on the horizontal axis. So the lengths of these segments reflect the rhythm information. In the top spectrogram, each segment is long, indicating a slow speaker; in the bottom spectrogram, each segment is short, indicating a fast speaker.

\textbf{Pitch} ~Pitch is an important component of intonation. One popular representation of the pitch information is the \emph{pitch targets} \citep{xu2001pitch}, which is defined as the intended pitch, \emph{e.g.} rise or fall, high or low \emph{etc.}, of each syllable. The pitch information, or pitch target information, is contained in the pitch contour, because the pitch contour is generally considered as the result of a constant attempt to hit the pitch targets of each syllable, subject to the physical constraints \cite{xu2001pitch}. However, the pitch contour also entangles other information. First, the pitch contour contains the rhythm information, because each nonzero segment of the pitch contour represents a voiced segment, which typically corresponds to a word or a syllable. So the length of each voiced segment indicates how fast the speaker speaks. Second, the pitch range reflects certain speaker identity information -- female speakers tend to have a high pitch range, as shown in the upper panel of figure~\ref{fig:background}, and male speakers tend to have a low pitch range, as shown in the lower panel of figure~\ref{fig:background}. Here, we assume that the impact of the speaker identity on the pitch contour is linear. In other words, if we normalize the pitch contour to a common mean and standard deviation, the speaker identity information will be removed. To sum up, the pitch contour entangles the information of speaker identity, rhythm and pitch; the normalized pitch contour only contains the information of the latter two.

\textbf{Timbre} ~Timbre is perceived as the voice characteristics of a speaker. It is reflected by the \emph{formant frequencies}, which are the resonant frequency components in the vocal tract. In a spectrogram, the formants are shown as the salient frequency components of the spectral envelope. In figure~\ref{fig:background}, the rectangles and arrows on the spectrogram highlight three formants. As can be seen, the top spectrogram has a higher formant frequency range, indicating a bright voice; the bottom spectrogram has a lower formant frequency range, indicating a deep voice.

\textbf{Content} ~In English and many other languages, the basic unit of content is the phone. Each phone comes with a particular formant pattern. For example, the three formants highlighted in figure\ref{fig:background} are the second, third and fourth lowest formants of the phone \textit{`ea'} as in \textit{`please'}. Although their formant frequencies have different ranges, which indicates their timbre difference, they have the same pattern -- they tend to cluster together and are far away from the lowest formant (which is at around 100 Hz). 

\section{\algnamens}

This section introduces \algname. For notation, upper-cased letters, \e{X} and \e{\bm X}, denote random scalars and vectors respectively; lower-cased letters, \e{x} and \e{\bm x}, denote deterministic scalars and vectors respectively; \e{H(\bm X)} denotes the Shannon entropy of \e{\bm X}; \e{H(Y | \bm X)} denotes the entropy of \e{Y} conditional on \e{\bm{X}}; \e{I(Y; \bm X)} denotes the mutual information.

\subsection{Problem Formulation}

Denote \e{\bm S=\{\bm S_t\}} as a speech spectrogram, where \e{t} is the time index. Denote the speaker's identity as \e{U}. We assume that \e{\bm S} and \e{U} are generated through the following random generative processes
\begin{equation}
\small
    \bm S = g_s(\bm C, \bm R, \bm F, \bm V), \quad U = g_u(\bm V),
    \label{eq:speech_gen}
\end{equation}
where \e{\bm{C}} denotes content; \e{\bm R} denotes rhythm; \e{\bm F} denotes pitch target; \e{\bm V} denotes timbre. \e{g_s(\cdot)} and \e{g_u(\cdot)} are assumed to be a one-to-one mapping. Note that here we assume \e{\bm C} also accounts for the residual information that is not included in rhythm, pitch or timbre.

Our goal is to construct an autoencoder-based generative model for speech, such that the hidden code contains disentangled representations of the speech components. We formally denote the representations as \e{\bm Z_c}, \e{\bm Z_r} and \e{\bm Z_f}, and these representations should satisfy
\begin{equation}
    \small
    \bm Z_c = h_c(\bm C), \quad \bm Z_r = h_r(\bm R), \quad \bm Z_f = h_f(\bm F),
    \label{eq:disentagle}
\end{equation}
where \e{h_c(\cdot)}, \e{h_r(\cdot)} and \e{h_f(\cdot)} are all one-to-one mappings.



\subsection{\textsc{AutoVC} and Its Limitations}
\label{subsec:autovc}

Since \algname inherits the information bottleneck mechanism proposed in \textsc{AutoVC}, it is necessary to first review its framework and limitations. Figure~\ref{fig:model}(a) shows the framework of \textsc{AutoVC}, which consists of an encoder and a decoder. The encoder has an information bottleneck at the end (shown as the grey tip), which is implemented as hard constraint on code dimensions. The input to the encoder is speech spectrogram \e{\bm S}, and the output of the encoder is called the speech code, denoted as \e{\bm Z}. The decoder takes \e{\bm Z} and the speaker identity label \e{U} as its inputs, and produces a speech spectrogram \e{\hat{\bm S}} as output. We formally denote the encoder as \e{E(\cdot)}, and the decoder as \e{D(\cdot, \cdot)}. The \textsc{AutoVC} pipeline can be expressed as
\begin{equation}
    \small
    \bm Z = E(\bm S), \quad \hat{\bm S} = D(\bm Z, U).
\end{equation}
During training, the output of the decoder tries to reconstruct the input spectrogram:
\begin{equation}
    \small
    \min_{\bm \theta} \mathbb{E}[\Vert \hat{\bm S} - \bm S \Vert _2^2],
    \label{eq:loss}
\end{equation}
where \e{\bm \theta} denotes all the trainable parameters.

It can be shown that if the information bottleneck is tuned to the right size, this simple scheme can achieve disentanglement of the timbre information as
\begin{equation}
    \small
    \bm Z = h(\bm C, \bm R, \bm F).
\end{equation}
Figure~\ref{fig:model}(a) provides an intuitive explanation of why this is possible. As can be seen, speech is represented as a concatenation of different blocks, indicating the content, rhythm, pitch and timbre information. Note that speaker identity is represented with the same block style as timbre because it is assumed to preserve equivalent information to timbre according to equation \eqref{eq:speech_gen}. Since the speaker identity is separately fed to the decoder, the decoder can still have access to all the information to perform self-reconstruction even if the encoder does not preserve the timbre information in its output. Therefore, when the information bottleneck is binding, the encoder will remove the timbre information. However, \e{\bm Z} still lumps content, rhythm, and pitch together. As a result, \textsc{AutoVC} can only convert timbre.


\subsection{The \algname Framework}

Figure \ref{fig:model}(b) illustrates the \algname framework. \algname is also an autoencoder with an information bottleneck. However, in order to further decompose the remaining speech components, \algname introduces three encoders with heterogeneous information bottleneck, which are a \emph{content encoder}, a \emph{rhythm encoder}, and a \emph{pitch encoder}. Below are the details of the encoders and the decoder of \algnamens.

\textbf{The Encoders} ~As shown in figure~\ref{fig:model}(b), all three encoders are almost the same, but with two subtle differences. First, the input to the content encoder and rhythm encoder is speech \e{\bm S}, whereas the input to the pitch encoder is the normalized pitch contour, which we denote as \e{\bm P}. As discussed in section~\ref{sec:background}, the normalized pitch contour \e{\bm P} refers to the pitch contour that is normalized to have the same mean and variance across all the speakers, so the normalized pitch contour only contains the pitch information, \e{\bm F}, and rhythm information, \e{\bm R}, but no speaker ID information, \e{U}.

Second, the content encoder and pitch encoder perform a random resampling operation along the time dimension of the input. Random resampling involves two steps of operations. The first step is to divide the input into segments of random lengths. The second step is to randomly stretch or squeeze each segment along the time dimension. Therefore, random resampling can be regarded as an information bottleneck on rhythm. More details of random resampling can be found in appendix~\ref{subsec:info_bottleneck}. All the encoders have the physical information bottleneck at the output. The final outputs of the encoders are called content code, rhythm code and pitch code, which are denoted as \e{\bm Z_c}, \e{\bm Z_r} and \e{\bm Z_f} respectively. Formally, denote the content encoder as \e{E_c(\cdot)}, rhythm encoder as \e{E_r(\cdot)} and pitch encoder as \e{E_f(\cdot)}, and denote the random resampling operation as \e{A(\cdot)}. Then we have
\begin{equation}
\small
    \bm Z_c = E_c (A (\bm S)), \quad \bm Z_r = E_r(\bm S), \quad \bm Z_f = E_f (A(\bm P)).
\end{equation}

\textbf{The Decoder} ~The decoder takes all the speech code and the speaker identity label (or embedding) as its inputs, and produce a speech spectrogram as output, \emph{i.e.},
\begin{equation}
    \small
    \hat{\bm S} = D(\bm Z_c, \bm Z_r, \bm Z_f, U).
\end{equation}
During training, the output of the decoder tries to reconstruct the input spectrogram, which is the same as in equation \eqref{eq:loss}.

Counter-intuitive as it may sound, we claim that when all the information bottlenecks are appropriately set and the network representation power is sufficient, a minimizer of equation \eqref{eq:loss} will satisfy the disentanglement condition as in equation \eqref{eq:disentagle}. In what follows, we will explain why such decomposition is possible.

\begin{figure}[t]
\raggedleft
\includegraphics[width=1\linewidth]{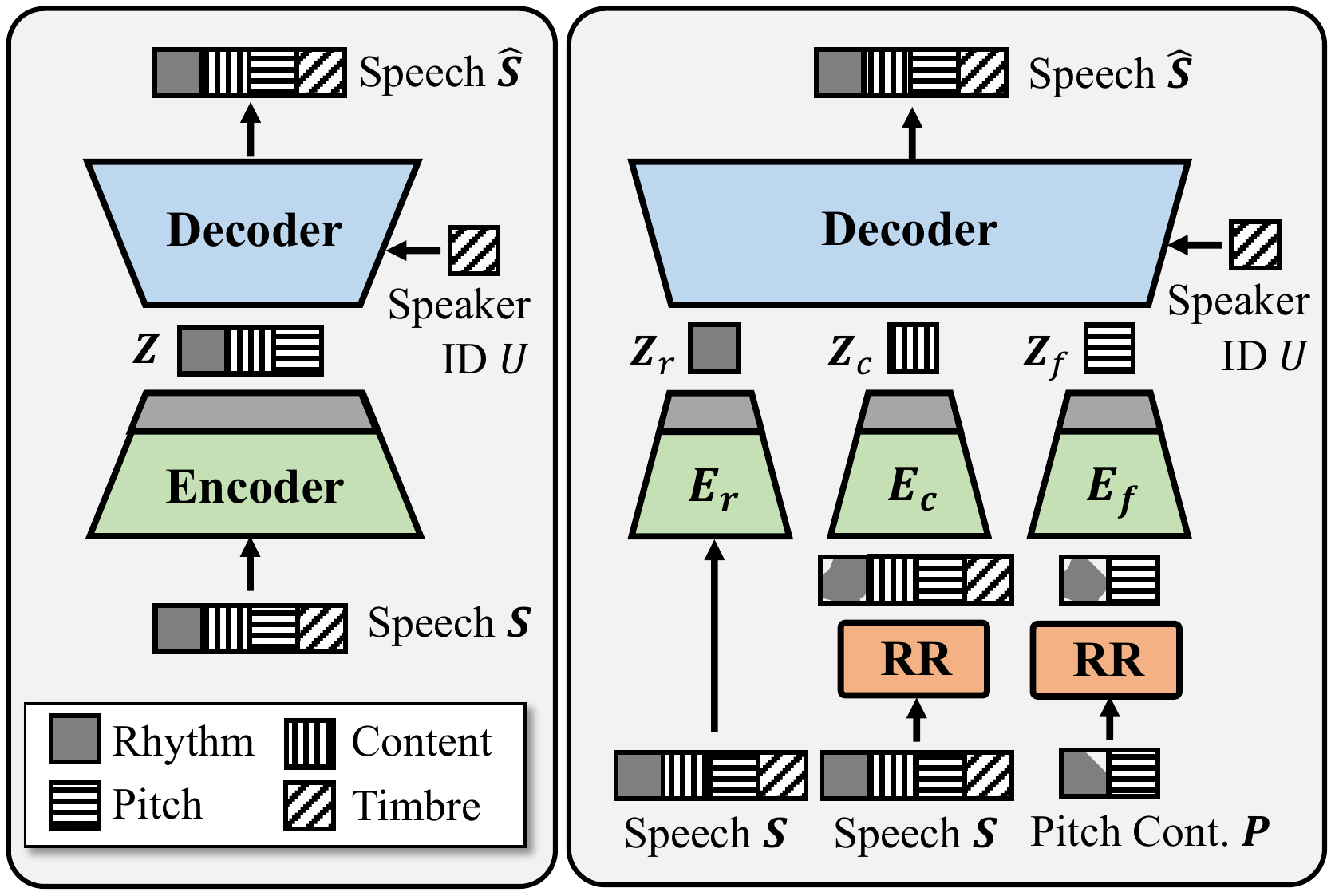}
\small
(a) \textsc{AutoVC}~~~~~~~~~~~~~~~~~~~~~~~~~~~(b) \algnamens~~~~~~~~~~~~~~~
\caption{\small{Frameworks and \textsc{AutoVC} and \algname and illustration of why they can perform disentanglement. Signals are represented as blocks to denote their information components. \e{E_r} denotes the rhythm encoder; \e{E_c} denotes the content encoder; \e{E_f} denotes the pitch encoder. `RR' denotes random resampling. `Pitch Cont.' is short for the normalized pitch contour. The grey block at the tip of the encoders denotes the information bottleneck. Some rhythm blocks have some holes in them, which represents that a portion of the rhythm information is lost. The bottlenecks force the encoders to pass only the information that other encoders cannot supply, hence achieving the disentanglement.}}
\label{fig:model}
\end{figure}


\subsection{Why Does It Force Speech Decomposition?}
\label{subsec:why}

Figure~\ref{fig:model} provides an intuitive illustration of how \algname achieves speech decomposition, where a few important assumptions are made.

\textit{Assumption 1:} The random resampling operation will contaminate the rhythm information \e{\bm R}, \emph{i.e.} \e{\forall \bm r_1 \neq \bm r_2}
\begin{equation}
\small
    Pr[A(g_s(\bm C, \bm r_1, \bm F, \bm V)) = A(g_s(\bm C, \bm r_2, \bm F, \bm V))]>0.
    \label{eq:resample1}
\end{equation}

\textit{Assumption 2:} The random resampling operation will not contaminate the other speech components, \emph{i.e.}
\begin{equation}
\small
    I(\bm C; A(\bm S)) = H(\bm C), \quad I(\bm F; A(\bm S)) = H(\bm F).
    \label{eq:resample2}
\end{equation}

\textit{Assumption 3:} The pitch contour \e{\bm P} contains all the pitch information and a portion of rhythm information.
\begin{equation}
    \small
    \bm P = g_p(\bm F, \bm R), \quad I(\bm F; \bm P) = H(\bm F).
    \label{eq:pitch_contour}
\end{equation}

As shown in figure~\ref{fig:model}(b), speech contains four blocks of information. When it passes through the random resampling operation, a random portion of the rhythm block is wiped (shown as the holes in the rhythm block at the output of the RR module), but the other blocks remain intact. On the other hand, the normalized pitch contour mainly contains two blocks, the pitch block, and the rhythm block. The rhythm block is missing a corner because the normalized pitch contour does not contain all the rhythm information, and it misses even more when it passes through the random resampling module.

Similar to the \textsc{AutoVC} claim, the timbre information is directly fed to the decoder, so all the encoders do not need to encode the timbre information. Therefore, this section focuses on explaining why \algname can force the encoders to separately encode the content, pitch, and timbre. 

First, the rhythm encoder \e{E_r(\cdot)} is the only encoder that has access to the complete rhythm information \e{\bm R}. The other two encoders only preserve a random portion of \e{\bm R}, and there is no way for \e{E_r(\cdot)} to guess which part is lost and thus only supply the lost part. Therefore, \e{E_r(\cdot)} must pass all the rhythm information. Meanwhile, the other aspects are available in the other two encoders. So if \e{E_r(\cdot)} is forced to lose some information by its information bottleneck, it will prioritize removing the content, pitch, and timbre.

Second, given that \e{E_r(\cdot)} only encodes \e{\bm R}, then the content encoder \e{E_c(\cdot)} becomes the only encoder that can encode all the content information \e{\bm C}, because the pitch encoder does not have access to \e{\bm C}. Therefore, \e{E_c(\cdot)} must pass all the content information. Meanwhile, the other aspects can be supplied elsewhere, so the rhythm encoder will remove the other aspects if the information bottleneck is binding.

Finally, with \e{E_r(\cdot)} encoding only \e{\bm R} and \e{E_c(\cdot)} encoding only \e{\bm C}, the pitch encoder \e{E_f(\cdot)} must encode the pitch information. All the other aspects are supplied in other channels, so \e{E_f(\cdot)} will prioritize removing these aspects if the information bottleneck is binding.

Simply put, if each encoder is only allowed to pass one block, then the arrangement in figure~\ref{fig:model} is the only way to ensure full recovery of the speech information. 

Now we are ready to give our formal result.
\begin{theorem}
Assume \e{\bm C}, \e{\bm R}, \e{\bm F}, \e{U} are independent, and that the information bottleneck is precisely set such that
\begin{equation}
    \small
    H(\bm Z_c) = H(\bm C), \quad H(\bm Z_r) = H(\bm R), \quad H(\bm Z_f) = H(\bm F).
    \label{eq:info_assump}
\end{equation}
Assume equations \eqref{eq:speech_gen}, \eqref{eq:resample1}, \eqref{eq:resample2} and \eqref{eq:pitch_contour} hold. Then the global optimum of equation \eqref{eq:loss} would produce the disentangled representation as in equation \eqref{eq:disentagle}.
\label{thm:main}
\end{theorem}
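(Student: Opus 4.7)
The plan is to extract from the global optimum a sequence of information-theoretic equalities that eventually pin each encoder to the component it is designed to capture.

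\textbf{Step 1 (achievability).} I would first exhibit one explicit minimizer to show the optimum attains zero loss. Take $E_r(\bm S) = \bm R$ (valid because $g_s$ is one-to-one, so $\bm R$ is a deterministic function of $\bm S$), $E_c(A(\bm S)) = \bm C$ (valid by (\ref{eq:resample2}), which makes $\bm C$ a deterministic function of $A(\bm S)$), $E_f(A(\bm P)) = \bm F$ (valid by the analogous preservation property for $\bm P$ implicit in (\ref{eq:pitch_contour}) combined with the random resampling), and $D(\bm Z_c,\bm Z_r,\bm Z_f,U) = g_s(\bm Z_c,\bm Z_r,\bm Z_f,g_u^{-1}(U))$. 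This tuple reconstructs $\bm S$ exactly and respects the bottleneck equalities (\ref{eq:info_assump}), so the infimum of (\ref{eq:loss}) is zero and every global minimizer must also satisfy $\hat{\bm S}=\bm S$ almost surely.

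\textbf{Step 2 (entropy squeeze).} Since $g_s$ and $g_u$ are bijections and $\bm V = g_u^{-1}(U)$, perfect reconstruction gives $H(\bm C,\bm R,\bm F\mid \bm Z_c,\bm Z_r,\bm Z_f,U)=0$. Using $(\bm C,\bm R,\bm F)\perp U$, the chain
\begin{align*}
H(\bm C)+H(\bm R)+H(\bm F) &= H(\bm C,\bm R,\bm F) = I(\bm C,\bm R,\bm F;\bm Z_c,\bm Z_r,\bm Z_f\mid U) \\
&\le H(\bm Z_c,\bm Z_r,\bm Z_f\mid U) \\
&\le \textstyle\sum_j H(\bm Z_j\mid U) \le \sum_j H(\bm Z_j) \\
&= H(\bm C)+H(\bm R)+H(\bm F)
\end{align*}
must collapse to equalities throughout. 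This yields (i) mutual independence of $\{\bm Z_c,\bm Z_r,\bm Z_f,U\}$, and (ii) $H(\bm Z_c,\bm Z_r,\bm Z_f)=H(\bm C,\bm R,\bm F)$ with $(\bm C,\bm R,\bm F)$ recoverable as a deterministic function of $(\bm Z_c,\bm Z_r,\bm Z_f)$ alone.

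\textbf{Step 3 (allocating channels).} I would then use the asymmetric bottlenecks to split the bijection into three separate bijections. By Assumption~1 (\ref{eq:resample1}), for any $\bm r_1 \neq \bm r_2$ the event $\{A(g_s(\bm c,\bm r_1,\bm f,\bm v)) = A(g_s(\bm c,\bm r_2,\bm f,\bm v))\}$ has positive probability; on this event $\bm Z_c$ takes identical values for both rhythms, so $\bm Z_c$ cannot resolve $\bm r_1$ from $\bm r_2$. The analogous coincidence for $A(\bm P) = A(g_p(\bm F,\bm R))$ shows $\bm Z_f$ cannot resolve them either. Hence rhythm can reach the decoder only through $\bm Z_r$; combined with $H(\bm Z_r) = H(\bm R)$ and the Step-2 bijection, this forces $\bm Z_r = h_r(\bm R)$ with one-to-one $h_r$. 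Because $E_f$ only ever sees the pitch contour $\bm P = g_p(\bm F,\bm R)$, $\bm Z_f$ carries no content information; with $\bm Z_r$ pinned to rhythm, the $H(\bm C)$ bits of content must flow through $\bm Z_c$, giving $\bm Z_c = h_c(\bm C)$, and the leftover channel yields $\bm Z_f = h_f(\bm F)$.

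\textbf{Main obstacle.} The delicate link is converting Assumption~1's \emph{positive-probability} coincidence into the \emph{deterministic functional} statement that $\bm Z_c$ does not depend on $\bm R$. One has to combine (a) the Step-2 conclusion that marginalizing the resampling randomness still leaves $\bm Z_c,\bm Z_r,\bm Z_f$ jointly entropy-saturated, and (b) the measure-positive collision event supplied by Assumption~1, while carefully handling the fact that $\bm Z_c$ is a priori a stochastic map. A related subtlety is that Step~2 alone does not preclude entangled bijective solutions such as $\bm Z_c = \phi(\bm C,\bm F),\ \bm Z_f = \psi(\bm F)$; ruling these out is precisely where the ``who has access to what'' asymmetry---only $E_r$ sees unresampled speech, only $E_f$ is denied content, only $E_c$ is denied full rhythm---must do the work.
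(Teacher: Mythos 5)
Your overall strategy is the same as the paper's: show zero loss is achievable, then use the access asymmetry (only $E_r$ sees unresampled speech, the pitch contour carries no content, the exact bottleneck sizes forbid excess information) to pin $\bm Z_r$, then $\bm Z_c$, then $\bm Z_f$, exactly the order of the paper's Lemmas~\ref{lem:r}, \ref{lem:c} and \ref{lem:f}; your Step~1 is the paper's first lemma. The genuine gap is precisely the step you yourself label the ``main obstacle'' and then leave unresolved: converting the positive-probability collision of equation~\eqref{eq:resample1} into the quantitative statement that a rhythm-deficient code together with the resampled inputs still cannot determine $\bm R$. The paper devotes a lemma to this: for any $f(\cdot)$ with $H(\bm R\mid f(\bm R))>0$, equation~\eqref{eq:resample1} implies $I(\bm R;A(\bm S),f(\bm R))<H(\bm R)$, which is then chained with the data-processing inequality into $H(\bm R\mid\hat{\bm S})\geq H(\bm R\mid \bm Z_r,\bm Z_c,\bm Z_f)\geq H(\bm R\mid \bm Z_r,A(\bm S),A(\bm P))>0$, contradicting perfect reconstruction; the ``too much information'' direction (your entangled-bijection worry, e.g.\ $\bm Z_c=\phi(\bm C,\bm F)$) is dispatched by the identity $H(\bm R\mid\bm Z_r)=H(\bm Z_r\mid\bm R)+H(\bm R)-H(\bm Z_r)\geq H(\bm Z_r\mid\bm R)$ together with equation~\eqref{eq:info_assump}, reducing it to the first case. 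Flagging this link is not supplying it, so the proof is incomplete at its decisive point.

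A second, concrete flaw: Step~2's conclusion (ii), that $(\bm C,\bm R,\bm F)$ is recoverable from $(\bm Z_c,\bm Z_r,\bm Z_f)$ \emph{alone}, does not follow from the displayed equality chain. Take $\bm R$ and $U$ independent fair bits and $\bm Z_r=\bm R\oplus U$: then $\bm Z_r\perp U$, $H(\bm Z_r)=H(\bm R)$, $\bm Z_r$ is deterministic given $(\bm R,U)$, and the decoder (which receives $U$) reconstructs perfectly, so every equality in your squeeze holds, yet $H(\bm R\mid\bm Z_r)=1$. Thus $U$-entangled codes are not excluded by entropy saturation; ruling them out is part of what the per-component contradiction arguments must do, and it is also why your Step-3 jump from ``rhythm reaches the decoder only through $\bm Z_r$'' to ``$\bm Z_r=h_r(\bm R)$'' needs the full conditional-entropy chain (including justifying that $\bm V$/$U$ can be dropped from the conditioning), rather than following from the bijection of Step~2. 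A smaller point: Assumptions~1--2 are stated for $A(\bm S)$ only, so your appeals to an ``analogous coincidence/preservation property'' for $A(\bm P)$ are additional hypotheses you should state explicitly (the paper's appendix implicitly relies on the same extension when it writes $H(\bm R\mid\bm Z_r,A(\bm S),A(\bm P))>0$).
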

The proof is presented in the appendix. Although theorem \ref{thm:main} is contingent on a set of relatively stringent conditions, which may not hold in practice, we will empirically verify the disentanglement capabilities in section~\ref{sec:exper}.

\subsection{Network Architecture}

\begin{figure}[t]
\centering
\includegraphics[width=1\linewidth]{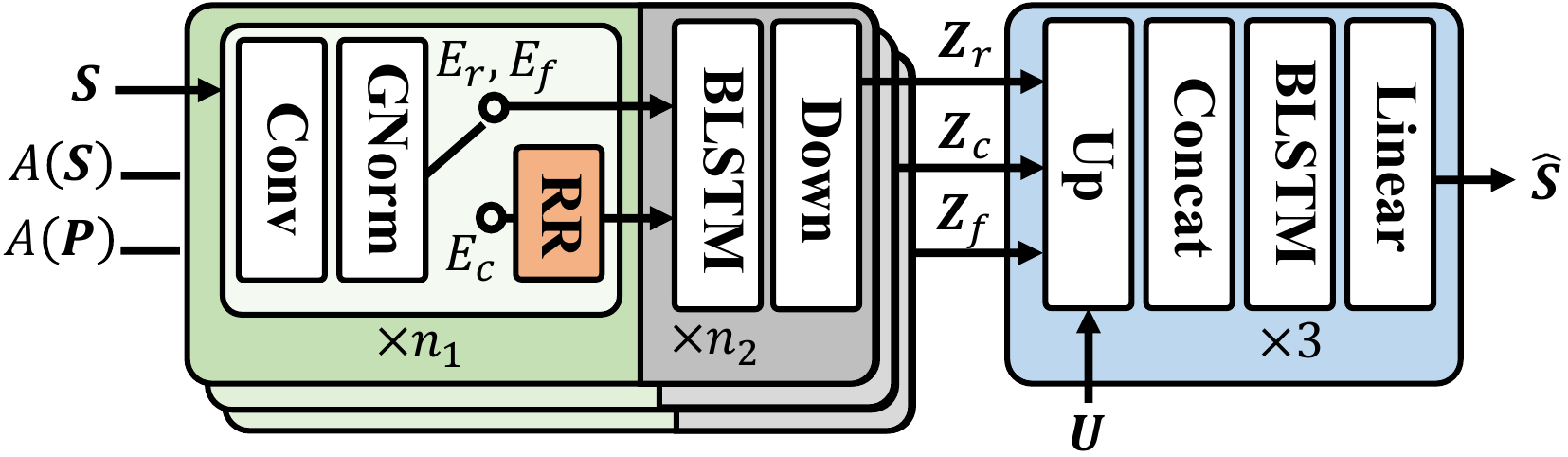}
\small
\caption{\small{The architecture of \algnamens. `GNorm' denotes group normalization; `RR' denotes random resampling; `Down' and `Up' denote downsampling and upsampling operations respectively. `Linear' denotes linear projection layer. \e{\times n} denotes the module above is repeated \e{n} times.}}
\label{fig:archi}
\end{figure}

Figure~\ref{fig:archi} shows the architecture of \algnamens. The left module corresponds to the encoders and the right to the decoder. All three encoders share a similar architecture, which consists of a stack of \e{5\times 1} convolutional layers followed by group normalization \cite{wu2018group}. For the content encoder, the output of each convolutional layer is passed to a random resampling module to further contaminate rhythm. The final output of the convolutional layers is fed to a stack of bidirectional-LSTM layers to reduce the feature dimension, which is then passed through a downsampling operation to reduce the temporal dimension, producing the hidden representations. Table~\ref{tab:hyper} shows the hyperparameter settings of each encoder. More details of the downsampling operation are provided in appendix~\ref{subsec:info_bottleneck}.

\begin{table}[t]
    \centering
    \small
    \caption{Hyperparameter settings of the encoders.}
    
    \begin{tabular}{l|lll}
        \hline\hline
         & \textbf{Rhythm} & \textbf{Content} & \textbf{Pitch}\\
        \hline
        Conv Layers & 1 & 3 & 3 \\
        Conv Dim & 128 & 512 & 256 \\
        Norm Groups & 8 & 32 & 16 \\
        BLSTM Layers & 1 & 2 & 1 \\
        BLSTM Dim & 1 & 8 & 32 \\
        Downsample Factor & 8 & 8 & 8 \\
         \hline\hline
    \end{tabular}
    \label{tab:hyper}
\end{table}

The decoder first upsamples the hidden representation to restore the original sampling rate. The speaker identity label \e{U}, which is a one-hot vector, is also repeated along the time dimension to match the temporal dimension of the other upsampled representations. All the representations are then concatenated along the channel dimension and fed to a stack of three bidirectional-LSTM layers with an output linear layer to produce the final output. The spectrogram is converted back to the speech waveform using the same wavenet-vocoder as in \textsc{AutoVC}. Additional architecture and implementation details are provided in appendix~\ref{sec:implementation_details}.


\section{Experiments}
\label{sec:exper}

In this section, we will empirically verify the disentanglement capability of \algnamens.  We will be visualizing our speech results using spectrogram and pitch contour.  However, to fully appreciate the performance of \algnamens, we strongly encourage readers to refer to our online demo\footnote{\url{https://auspicious3000.github.io/SpeechSplit-Demo/}}. Additional experiment results can be found in appendix~\ref{sec:exper_add}. The frequency axis units of all the spectrograms are in kHz, and those of the pitch contour plots are in Hz.

\subsection{Configurations}

The experiments are performed on the VCTK dataset \cite{veaux2016superseded}. The training set contains 20 speakers where each speaker has about 15 minutes of speech. The test set contains the same 20 speakers but with different utterances, which is the conventional voice conversion setting. \algname is trained using the ADAM optimizer \cite{kingma2014adam} with a batch size of 16 for 800k steps. Since there are no other algorithms that can perform blind decomposition so far, we will be comparing our result with \textsc{AutoVC}, a conventional voice conversion baseline.

The model selection is performed on the training dataset. Specifically, the physical bottleneck dimensions are tuned based on the criterion: when the input to one of the encoders or the speaker embedding is set to zero, the output reconstruction should not contain the corresponding information. As will be shown in section~\ref{subsec:zero}, setting the inputs and speaker embedding to zero can measure the degree of disentanglement. From the models that satisfy this criterion, we pick the one with the lowest training error. Appendix~\ref{app:tuning} provides additional guidance on how to tune the bottlenecks.

\subsection{Rhythm, Pitch and Timbre Conversions}
\label{subsec:main_conversion}

If \algname can decompose the speech into different components, then it should be able to separately perform style transfer on each aspect, which is achieved by replacing the input to the respective encoder with that of the target utterance. For example, if we want to convert pitch, we feed the target pitch contour to the pitch encoder. To convert timbre, we feed the target speaker id to the decoder.

We construct parallel speech pairs from the test set, where both the source and target speakers read the same utterances. Please note that we use the parallel pairs \emph{only for testing}. During training, \algname is trained without parallel speech data. For each parallel pair, we set one utterance as the source and one as the target, and perform seven different types of conversions, including three single-aspect conversions (rhythm-only, pitch-only and timbre-only), three double-aspect conversions (rhythm+pitch, rhythm+timbre, and pitch+timbre), and one all-aspect conversion.

\textbf{Conversion Visualization} ~Figure~\ref{fig:main_spect} shows the single-aspect conversion results on a speech pair uttering \textit{`Please call Stella'}. The source speaker is a slow female speaker, and the target speaker is a fast male speaker. As shown in figure~\ref{fig:main_spect}, \algname can separately convert each aspect. First, in terms of rhythm, note that the rhythm-only conversion is perfectly aligned with the target utterance in time, whereas the timbre-only and pitch-only conversions are perfectly aligned with the source utterance in time. Second, in terms of pitch, notice that the timbre-only and rhythm-only conversions have a falling tone on the word \textit{`Stella'}, which is the same as the source utterance, as highlighted by the dashed rectangle. The pitch-only conversion has a rising tone on \textit{`Stella'}, which is the same as the target utterance, as highlighted by the solid rectangles. Third, in terms of timbre, as highlighted by the rectangles on the spectrograms, the formants of pitch-only and rhythm-only conversions are as high as those of the source speech, and the formants of timbre-only conversions are as high as those in the target.

\begin{figure}[t]
\raggedleft
\includegraphics[width=1\linewidth]{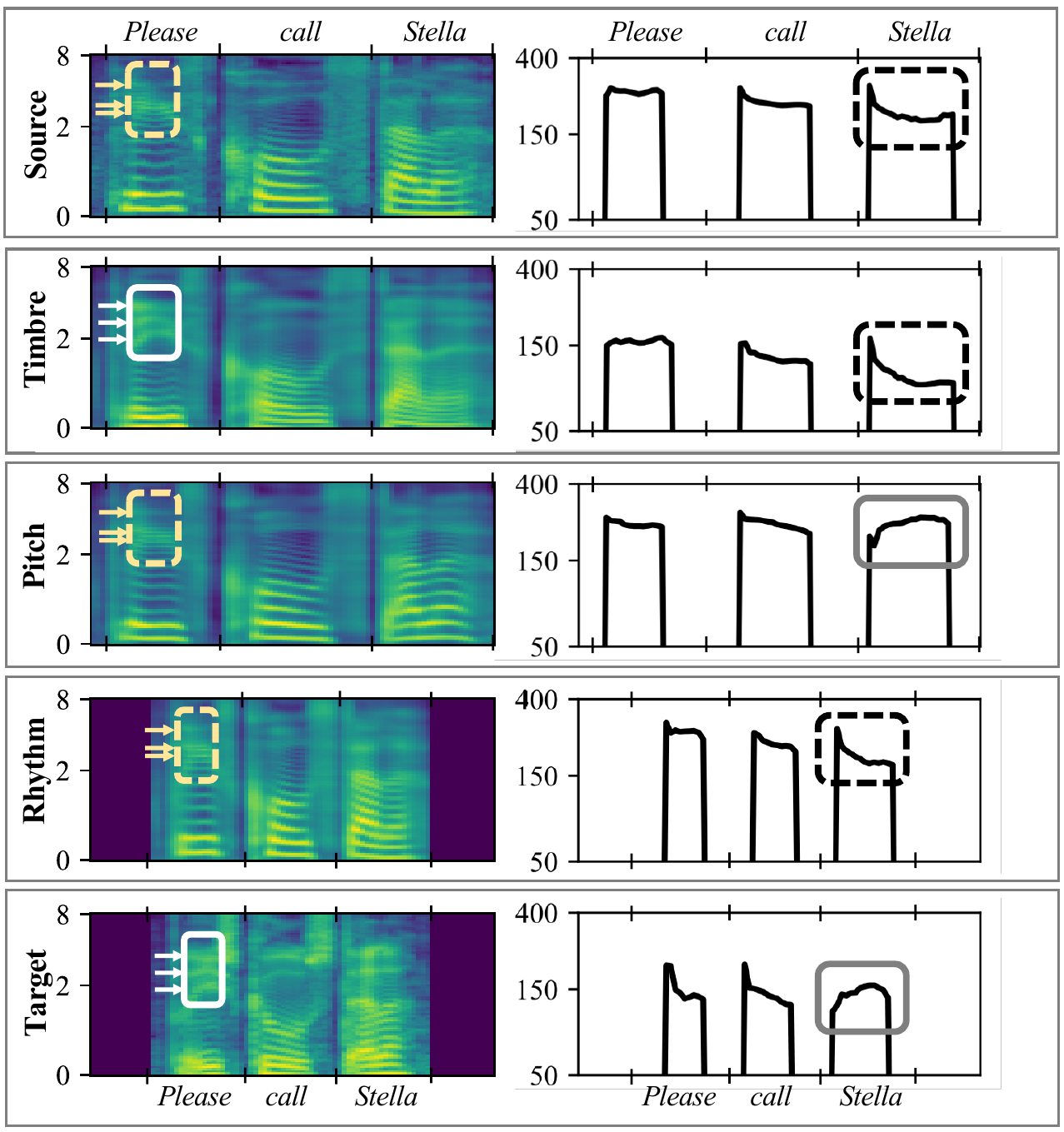}
\small
\vspace{-0.1in}
\caption{\small{Spectrogram (left) and pitch contours (right) of single-aspect conversion results of the utterance \textit{`Please call Stella'}. The left rectangle marks highlight the formant structures of the phone \textit{`ea'}. The arrows mark the frequencies of the second, third and fourth formants. The right rectangle marks highlight the pitch tones of the word \textit{`Stella'}. }}
\label{fig:main_spect}
\end{figure}

\textbf{Subjective Evaluation} ~We also perform a subjective evaluation on \emph{Amazon Mechanical Turk} on whether the conversion of each aspect is successful. For example, to evaluate whether the different conversions convert pitch, we select 20 speech pairs that are perceptually distinct in pitch, and generate all the seven types of conversions, plus the \textsc{AutoVC} conversion and the source utterance as baselines. Each test is assigned to five subjects. In the test, the subject is presented with two reference utterances, which are the source and target utterances in a random order, and then with one of the nine conversion results. The subject is asked to select which reference utterance has a more similar pitch tone to the converted utterance. We compute the \emph{pitch conversion rate} as the percentage of answers that choose the target utterance. We would expect the utterances with pitch converted to have a high pitch conversion rate; otherwise, the pitch conversion rate should be low. The \emph{rhythm conversion rate} and \emph{timbre conversion rate} are computed in a similar way. More details of the test token generation process can found in appendix \ref{app:test_select}.

\begin{figure}[t]
\centering
\includegraphics[width=\linewidth]{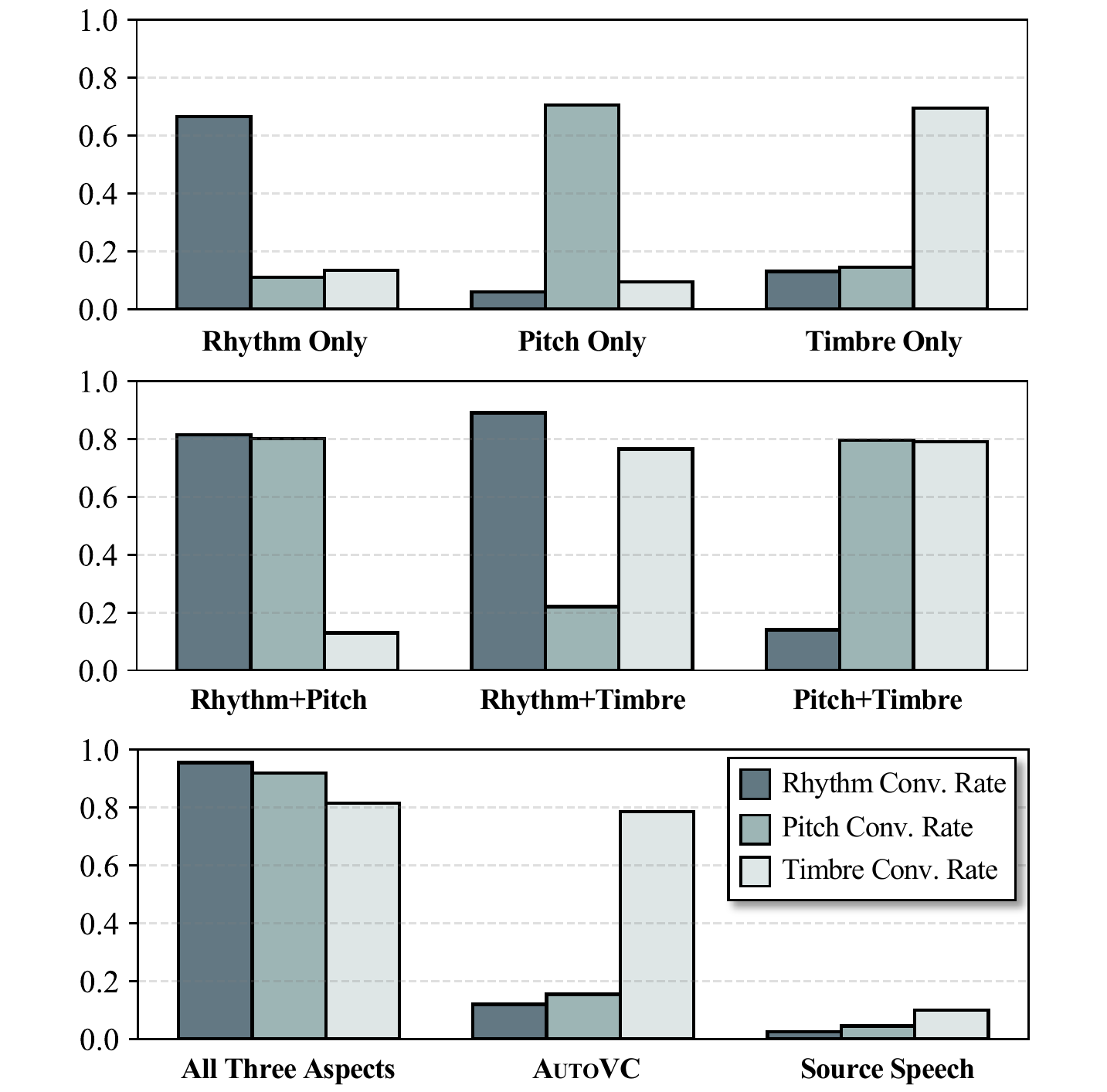}
\small
\vspace{-0.1in}
\caption{\small{Subjective conversion rates of different conversion types. Each bar group corresponds to a conversion type/algorithm. The three bars within each group represent the rhythm, pitch and timbre conversion rates respectively.}}
\label{fig:subject}
\end{figure}

\begin{figure}[ht]
\centering
\includegraphics[width=\linewidth]{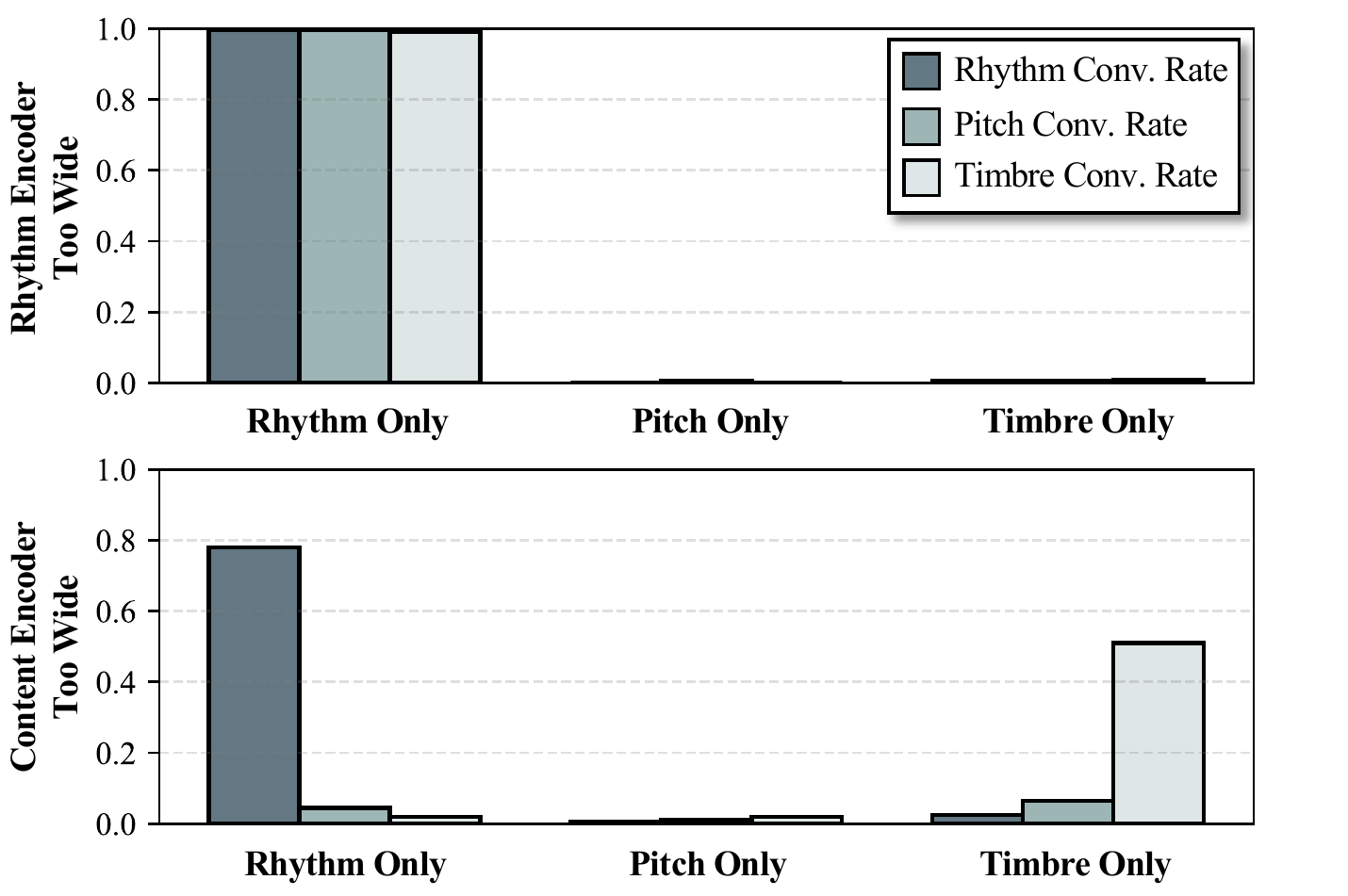}
\small
\vspace{-0.1in}
\caption{\small{Subjective conversion rates of single-aspect conversions of \algname when the information bottleneck of the rhythm encoder (top panel) or the content encoder (bottom panel) is too wide. Each bar group corresponds to a conversion type/algorithm. The three bars within each group represent the rhythm, pitch and timbre conversion rates respectively.}}
\label{fig:subject_wide}
\end{figure}

\begin{table}[t]
\label{mos}
\centering
\vspace{-0.1in}
\small
\caption{MOS of different conversion types/algorithms.}
\begin{tabular}{ccc}
    \hline\hline
     \textbf{Rhythm Only}& \textbf{Pitch Only} & \textbf{Timbre Only}  \\
     3.21 & 3.79 & 3.40 \\
     \hline
     \textbf{Rhythm+Pitch} & \textbf{Rhythm+Timbre} & \textbf{Pitch+Timbre}\\
     3.04 & 2.73 & 3.35 \\
     \hline
     \textbf{All Three} & \textbf{\textsc{AutoVC}} & \textbf{Source}  \\
     2.79 & 3.24 & 4.65 \\
     \hline\hline
\end{tabular}
\end{table}

Figure~\ref{fig:subject} shows the conversion rates of different types of conversions. As expected, the conversion rate is high when the corresponding aspect is converted, and low otherwise. For example, the pitch-only conversion has a high pitch conversion rate but low rhythm and timbre conversion rates; whereas the rhythm+timbre conversion has a high rhythm and timbre conversion rates but a low pitch conversion rate. It is worth noting that \textsc{AutoVC} has a high timbre conversion rate, but low in the other, indicating that it only converts timbre. In short, both the visualization results and our subjective evaluation verifies that each conversion can successfully convert the intended aspects, without altering the other aspects, whereas \textsc{AutoVC} only converts timbre.

We also evaluate the MOS (mean opinion score), ranging from one to five, on the quality of the conversion, as shown in table~\ref{mos}. There are a few interesting observations. First, the MOS of pitch conversion is higher than that of timbre and rhythm conversions, which implies that timbre and rhythm conversions are the more challenging tasks. Second, as the number of converted aspects increases, the MOS gets lower, because the conversion task gets more challenging.

\textbf{Objective Evaluation} ~Due to the lack of explicit labels of the speech components, it is difficult to fully evaluate the disentanglement results using objective metrics. However, we can still objectively evaluate the pitch-only conversion performance by comparing the pitch contour of the converted speech and the target pitch contour. Following \citet{valle2020mellotron}, we use three metrics for the comparison: Gross Pitch Error (GPE) \citep{nakatani2008method}, Voice Decision Error (VDE) \citep{nakatani2008method}, and F0 Frame Error (FFE) \citep{chu2009reducing}. \algname achieves a GPE of $1.04\%$, a VDE of $8.14\%$, and an FFE of $8.86\%$. As a reference, these results are comparable with the results reported in \citet{valle2020mellotron}, with a slightly higher GPE and lower VDE and FFE. Note that these two sets of results cannot be directly compared, because the datasets are different, but they show the effectiveness of the \algname in disentangling pitch.

\subsection{Mismatched Conversion Target}

Since utterances with mismatched contents have different numbers of syllables and lengths, we would like to find out how \algname converts rhythm when the source and target utterances read different content. Figure~\ref{fig:mismatched} shows the rhythm-only conversion between a long utterance, \textit{`And we will go meet her Wednesday'} (top left panel), and a short utterance, `\textit{Please call Stella'} (top right panel).

The short to long conversion is shown in the bottom left panel. It can be observed that the conversion tries to match the syllable structure of the long utterance by stretching its limited words. In particular, \textit{`please'} is stretched to cover \textit{`and we will'}, \textit{`call'} to cover \textit{`go meet'}, and \textit{`Stella'} to cover \textit{`her Wednesday'}. On the contrary, the long to short conversion, as shown in the bottom right panel, tries to squeeze everything to the limited syllable slots in the short utterance. Intriguingly still, the word mapping between the long utterance and the short utterance is \emph{exactly the same} as in the short to long conversion. In both cases, the word boundaries between the converted speech and the target speech are surprisingly aligned.

These observations suggest that \algname has an intricate \textit{`fill in the blank'} mechanism when combining the rhythm information with content and pitch. The rhythm code provides a number of blanks, and the decoder fills the blanks with the content information and pitch information provided by the respective encoders. Furthermore, there seems to be an anchoring mechanism that associates the content and pitch with the right blank, which functions stably even if the blanks and the content are mismatched.

\subsection{Removing Speech Components}
\label{subsec:zero}

\label{subsec:mismatched}

\begin{figure}[t]
\centering
\includegraphics[width=1\linewidth]{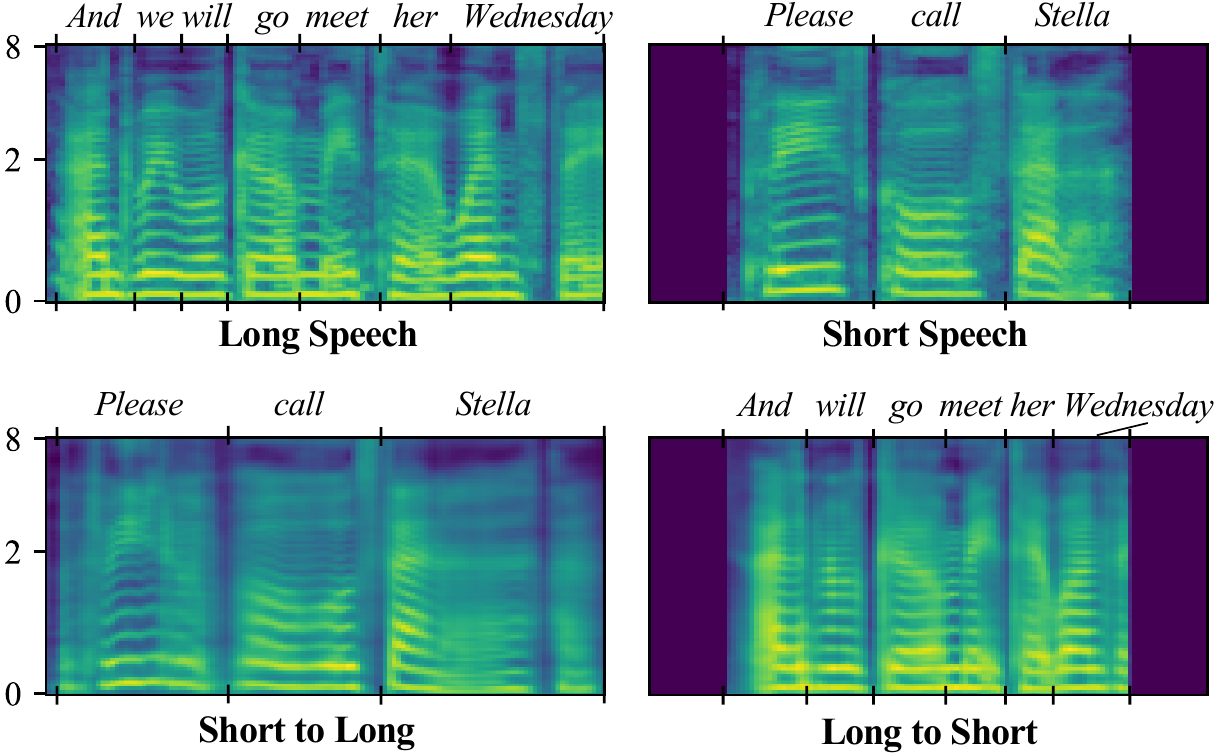}
\small
\vspace{-0.2in}
\caption{\small{Rhythm-only conversion when the source and target speech have mismatched content.}}
\label{fig:mismatched}
\end{figure}

\begin{figure}[t]
\centering
\includegraphics[width=1\linewidth]{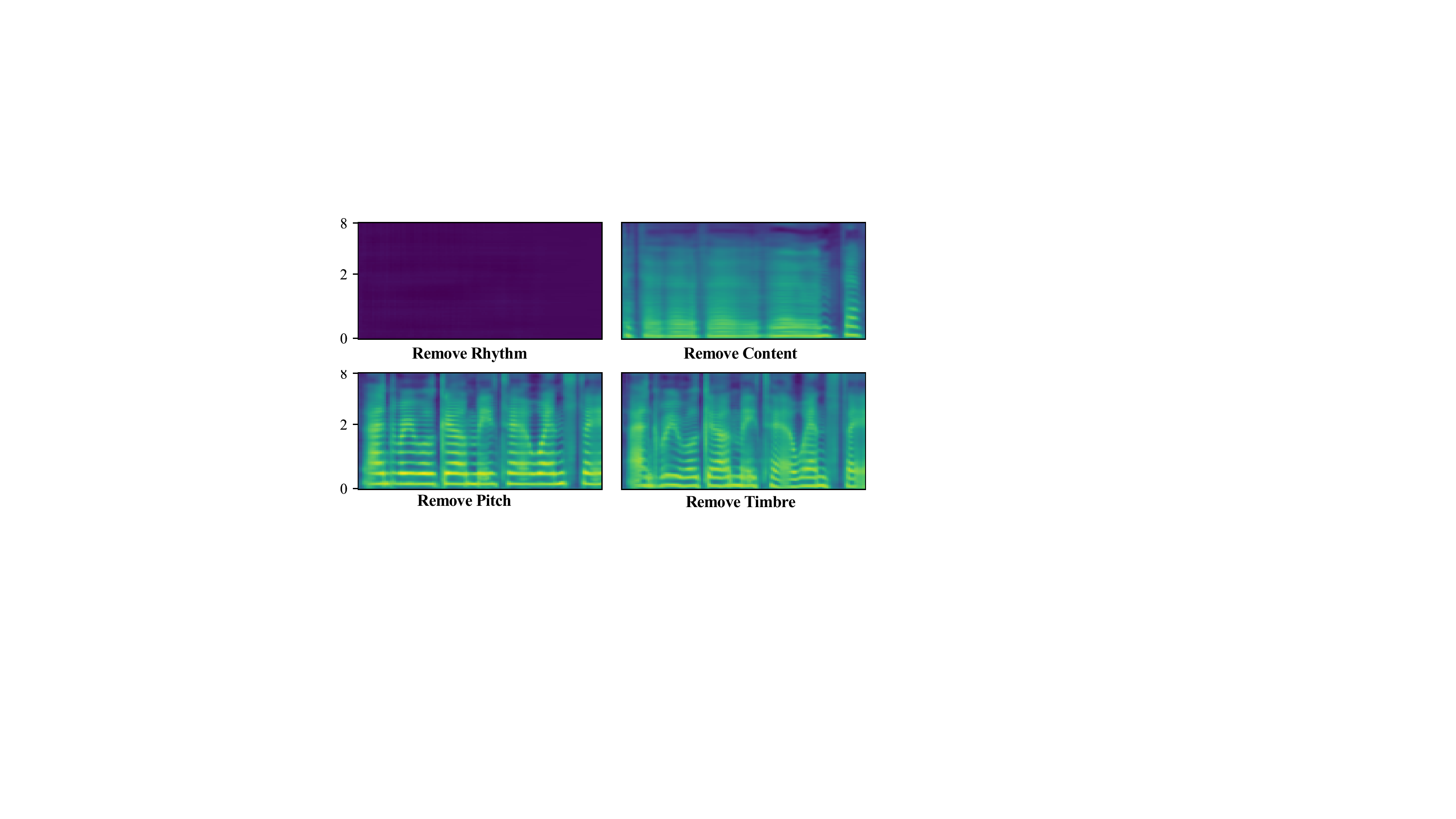}
\small
\vspace{-0.2in}
\caption{\small{Reconstructed speech when one speech component is removed. The ground truth speech is in figure~\ref{fig:mismatched} top left panel.}}
\label{fig:zero}
\end{figure}

To further understand the disentanglement mechanism of \algnamens, we generate spectrograms with one of the four components removed. To remove rhythm, content or pitch, we respectively set the input to the rhythm encoder, content encoder or pitch encoder to zero. To remove timbre, we set the speaker embedding to zero. Figure~\ref{fig:zero} shows the output spectrograms with one component removed. As can be observed, when the rhythm is removed, the output becomes zero, and when the content is removed, the output becomes a set of slots with no informative spectral shape. These findings are consistent with our `fill in the blank' hypothesis in section~\ref{subsec:mismatched}. When rhythm code is removed, there is no slot to fill, and hence the output spectrogram is blank. When content is removed, there is nothing to fill in the blanks, resulting in a spectrogram with uninformative blanks. When the pitch is removed, the pitch of the output becomes completely flat, as can be seen from the flat harmonics. Finally, when timbre is removed, the formant positions of the output spectrogram shift, which indicates that the timbre has changed, possibly to an average speaker. These results further verify that \algname can separately model different speech components.

\subsection{Varying the Information Bottleneck}
\label{subsec:vary_bottleneck}

In this section, we would like to verify our theoretical explanation in section~\ref{subsec:why} by varying the information bottleneck and see if \algname will still act as our theory predicts.

According to figure~\ref{fig:model}, if the physical information bottleneck of the rhythm encoder is too wide, then the rhythm encoder will pass all the information through, and the content encoder, pitch encoder and speaker identity will be useless. As a result, the rhythm-only conversion will convert all the aspects. On the other hand, the pitch-only and timbre-only conversions will alter nothing. Similarly, if the physical information bottleneck of the content encoder is too wide, but random sampling is still present, then the content encoder will pass almost all the information through, except for the rhythm information, because the random resampling operations still contaminate the rhythm information and \algname would still rely on the rhythm encoder to recover the rhythm information. As a result, the rhythm-only conversion would still convert rhythm, but the pitch-only and timbre-only conversions would barely alter anything.

Figure~\ref{fig:subject_wide} shows the subjective conversion rates of single-aspect conversions when the physical bottleneck of rhythm encoder or the content encoder is too wide. These results agree with our theoretical predictions. 
When the rhythm encoder physical bottleneck is too wide, the rhythm-only conversion converts all the aspects, while other conversions convert nothing. When the content encoder physical bottleneck is too wide, the rhythm-only conversion still converts rhythm. Notably, the timbre-only conversion still converts timbre to some degree, possibly due to the random resampling operation of the content encoder. These results verify our theoretical explanation of \algnamens.

\section{Conclusion}

We have demonstrated that \algname has powerful disentanglement capabilities by having multiple intricately designed information bottlenecks. There are three takeaways. First, we have shown that the physical dimension of the hidden representations can effectively limit the information flow. Second, we have verified that when information bottleneck is binding, neural autoencoder will only pass the information that other channels cannot provide. Third, even if we only have a partial disentanglement algorithm, \emph{e.g.} the random resampling, we can still design a complete disentanglement algorithm by having multiple channels with different information bottleneck. These intriguing observations inspire a generic approach to disentanglement. As future directions, we will seek to refine the bottleneck design of \algname using more information-theory-guided approaches, as well as explore the application of \algname to low-resource speech processing systems.

\section*{Acknowledgment}

We would like to give special thanks to Gaoyuan Zhang from MIT-IBM Watson AI Lab, who has helped us a lot with building our demo webpage.


\bibliography{example_paper}
\bibliographystyle{icml2020}

\newpage
\newpage
\appendix
\clearpage
\section{Proof to Theorem \ref{thm:main}}

The proof is divided into five parts.
\begin{lemma}
Under the assumptions in theorem~\ref{thm:main}, the global minimum of equation~\eqref{eq:loss} is \e{0}.
\end{lemma}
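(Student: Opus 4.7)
The plan is to prove the lemma constructively: exhibit specific encoders $E_c, E_r, E_f$ and decoder $D$ within the hypothesis space (respecting the entropy constraints in \eqref{eq:info_assump}) that achieve $\hat{\bm S} = \bm S$ almost surely. Since $\mathbb{E}[\|\hat{\bm S} - \bm S\|_2^2] \ge 0$ always, exhibiting a zero-error configuration immediately gives the global minimum as $0$.

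First, I would translate each mutual-information assumption into a ``deterministic function'' statement. From \eqref{eq:speech_gen}, the one-to-one map $g_s$ gives a measurable inverse, so the components $\bm C$, $\bm R$, $\bm F$, $\bm V$ are each a deterministic function of $\bm S$; in particular there exists $\phi_R$ with $\bm R = \phi_R(\bm S)$. From \eqref{eq:resample2}, $I(\bm C; A(\bm S)) = H(\bm C)$ forces $H(\bm C \mid A(\bm S)) = 0$, which means there is a measurable map $\phi_C$ with $\bm C = \phi_C(A(\bm S))$ a.s. The same reasoning applied to the pitch channel (invoking \eqref{eq:pitch_contour}, together with the analogue of \eqref{eq:resample2} for $A$ acting on $\bm P$, which is the natural counterpart used implicitly by the model) yields a map $\phi_F$ with $\bm F = \phi_F(A(\bm P))$ a.s. Finally, $g_u$ being one-to-one gives $\bm V = g_u^{-1}(U)$.

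Next I would define the candidate minimizer by setting
\begin{equation}
\small
E_c(\cdot) = \phi_C(\cdot), \quad E_r(\cdot) = \phi_R(\cdot), \quad E_f(\cdot) = \phi_F(\cdot),
\end{equation}
so that $\bm Z_c = \bm C$, $\bm Z_r = \bm R$, $\bm Z_f = \bm F$ a.s., and the decoder
\begin{equation}
\small
D(\bm z_c, \bm z_r, \bm z_f, u) = g_s\bigl(\bm z_c, \bm z_r, \bm z_f, g_u^{-1}(u)\bigr).
\end{equation}
Plugging these into the pipeline yields $\hat{\bm S} = g_s(\bm C, \bm R, \bm F, \bm V) = \bm S$ almost surely, so $\mathbb{E}[\|\hat{\bm S} - \bm S\|_2^2] = 0$. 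I would then verify consistency with the bottleneck condition \eqref{eq:info_assump}: since $\bm Z_c = \bm C$ we have $H(\bm Z_c) = H(\bm C)$, and analogously for $\bm Z_r$ and $\bm Z_f$, so the proposed configuration lies in the feasible set.

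The main obstacle I anticipate is the step asserting that $\bm F$ is recoverable from $A(\bm P)$. The stated assumptions give $\bm F$ as a function of $\bm P$ via \eqref{eq:pitch_contour} and give the $A$-invariance only for $\bm C$ and $\bm F$ extracted from $A(\bm S)$ via \eqref{eq:resample2}; I will need to argue that the random-resampling operator $A$ enjoys the same non-contamination property when fed $\bm P$, either by appealing to the symmetric construction of $A$ or by treating this as a mild extension of \eqref{eq:resample2}. A secondary subtlety is the ``sufficient representation power'' clause: the hypothesis space must contain the measurable maps $\phi_C, \phi_R, \phi_F$ and the composition $g_s \circ (\cdot, \cdot, \cdot, g_u^{-1})$, which I would state as an explicit expressivity hypothesis already implicit in the theorem's formulation.
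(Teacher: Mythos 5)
Your proof takes the same constructive route as the paper: choose encoders whose outputs are (bijective re-encodings of) the corresponding latent components and a decoder that inverts these and recomposes via $g_s$ and $g_u^{-1}$, yielding zero reconstruction error, which trivially lower-bounds the expected squared error. The paper's own proof is terser (it simply asserts that encoders satisfying equation~\eqref{eq:disentagle} are ``a feasible choice'' without unpacking why), and the subtlety you flag---that recoverability of $\bm F$ from $A(\bm P)$ does not literally follow from equation~\eqref{eq:resample2}, which speaks only of $A(\bm S)$---is a real gap that the paper also glosses over rather than a defect in your argument.
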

\begin{proof}
Construct the encoders that satisfy equation \eqref{eq:disentagle}, which is a feasible choice. Construct the decoder as follows:
\begin{equation}
    \small
    \begin{aligned}
    &D(\bm Z_c, \bm Z_r, \bm Z_f, \bm V) \\
    = &g_s(h_c^{-1}(\bm Z_c), h_r^{-1}(\bm Z_r), h_f^{-1}(\bm Z_f), g_u^{-1}(\bm V)) \\
    = &g_s(\bm C, \bm R, \bm F, U) \\
    = &\bm S,
    \end{aligned}
\end{equation}
which achieves \e{0} reconstruction loss in equation~\eqref{eq:loss}.
\end{proof}

\begin{lemma}
Equation~\eqref{eq:resample1} implies
\begin{equation}
\small
    I(\bm R; A(\bm S), f(\bm R)) < H(\bm R), \quad \forall f(\cdot) \mbox{ s.t. } H(\bm R|f(\bm R))>0.
    \label{eq:resample3}
\end{equation}
\begin{proof}
We will prove this by contradiction. If there exists an \e{f(\cdot)} s.t. \e{H(\bm R|f(\bm R))>0} but \e{I(\bm R; A(\bm S), f(\bm R)) = H(\bm R)}, then there exist \e{\bm r_1 \neq \bm r_2}, which \e{f(\cdot)} cannot distinguish but \e{A(\bm S)} can, \emph{i.e.}
\begin{equation}
\small
    A(g_s(\bm C_1, \bm r_1, \bm F_1, \bm V_1)) \neq A(g_s(\bm C_2, \bm r_2, \bm F_2, \bm V_2)), \quad \mbox{w.p. } 1.
\end{equation}
which contradicts with \eqref{eq:resample1}.
\end{proof}
\end{lemma}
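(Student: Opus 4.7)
The plan is to argue by contradiction, which is the natural route since the hypothesis \eqref{eq:resample1} is itself a lower bound on a probability: assume some \e{f(\cdot)} exists with \e{H(\bm R \mid f(\bm R)) > 0} yet \e{I(\bm R; A(\bm S), f(\bm R)) = H(\bm R)}, and then manufacture from it a pair \e{\bm r_1 \neq \bm r_2} whose images under the generative process must be separated by \e{A(\cdot)} with probability one, in direct conflict with \eqref{eq:resample1}.

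First I would convert the two starting facts into usable pointwise statements. From \e{H(\bm R \mid f(\bm R)) > 0} one obtains a value \e{y} in the range of \e{f} such that the conditional distribution of \e{\bm R} given \e{f(\bm R) = y} is not a point mass; in particular one can pick distinct \e{\bm r_1, \bm r_2} with \e{f(\bm r_1) = f(\bm r_2) = y} and both carrying positive mass under \e{\bm R}. From \e{I(\bm R; A(\bm S), f(\bm R)) = H(\bm R)} one gets the equivalent statement \e{H(\bm R \mid A(\bm S), f(\bm R)) = 0}, so \e{\bm R} is recoverable almost surely from the pair \e{(A(\bm S), f(\bm R))}.

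Combining these, the second coordinate \e{f(\bm R)} contributes nothing to discriminating \e{\bm r_1} from \e{\bm r_2}, since it takes the common value \e{y} in either case. Therefore the first coordinate \e{A(\bm S)} must do the entire job almost surely: the event that the two branches produce identical \e{A(\bm S)},
\begin{equation}
\small
\{A(g_s(\bm C, \bm r_1, \bm F, \bm V)) = A(g_s(\bm C, \bm r_2, \bm F, \bm V))\},
\end{equation}
must have probability zero, for otherwise a positive-probability coincidence in \e{(A(\bm S), f(\bm R))} would arise between \e{\bm R = \bm r_1} and \e{\bm R = \bm r_2}, violating the almost-sure recoverability. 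But \eqref{eq:resample1} asserts exactly that this probability is strictly positive for every such pair, giving the contradiction.

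The main obstacle I expect is the bookkeeping in the middle step, namely the passage from \e{H(\bm R \mid f(\bm R)) > 0} to a specific pair \e{(\bm r_1, \bm r_2)} of atoms collapsed by \e{f}, and then ensuring that these atoms are compatible with a common realization of \e{(\bm C, \bm F, \bm V)} so that \eqref{eq:resample1} can actually be invoked. The independence of \e{\bm C, \bm R, \bm F, \bm V} assumed in Theorem~\ref{thm:main} makes this clean, because the conditional support of \e{(\bm C, \bm F, \bm V)} does not depend on the choice of \e{\bm r_i}; absent independence, one would have to argue over a common-support set of \e{(\bm C, \bm F, \bm V)}, which is a soft but non-trivial measurability step.
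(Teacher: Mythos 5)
Your proposal is correct and follows essentially the same contradiction argument as the paper: extract a pair $\bm r_1 \neq \bm r_2$ collapsed by $f$ from $H(\bm R \mid f(\bm R)) > 0$, note that $I(\bm R; A(\bm S), f(\bm R)) = H(\bm R)$ forces $A(\bm S)$ to distinguish them almost surely, and contradict equation~\eqref{eq:resample1}. Your version is in fact more carefully spelled out than the paper's terse proof, particularly in flagging the role of the independence assumption in ensuring the pair is compatible with a common realization of $(\bm C, \bm F, \bm V)$.
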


\begin{lemma}
Under the assumptions in theorem~\ref{thm:main}, in order to achieve the global minimum of equation~\eqref{eq:loss}, \e{\bm Z_r} must satisfy equation~\eqref{eq:disentagle}.
\label{lem:r}
\end{lemma}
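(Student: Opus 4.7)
The plan is to pin down, via a two-sided entropy argument, that at any minimizer of \eqref{eq:loss} the rhythm code $\bm Z_r$ must be a bijective function of $\bm R$. I will split the work into three stages, using Lemmas~1 and~2 as black boxes. First, Lemma~1 tells us that the global minimum of \eqref{eq:loss} is zero, so at optimum $\bm S = D(\bm Z_c, \bm Z_r, \bm Z_f, U)$ almost surely. Since $g_s$ in \eqref{eq:speech_gen} is one-to-one, $\bm R$ is a deterministic function of $\bm S$, and therefore $H(\bm R \mid \bm Z_c, \bm Z_r, \bm Z_f, U) = 0$, equivalently $I(\bm R; \bm Z_c, \bm Z_r, \bm Z_f, U) = H(\bm R)$.

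Second, I will show that $(\bm Z_c, \bm Z_f, U)$ together with any strictly lossy summary $f(\bm R)$ of $\bm R$ cannot recover $\bm R$. Since $U = g_u(\bm V)$ and $\bm V \perp \bm R$ by the independence hypothesis, $U$ contributes no information about $\bm R$. For $\bm Z_c = E_c(A(\bm S))$, the data-processing inequality combined with Lemma~2 gives $I(\bm R; \bm Z_c, f(\bm R)) \leq I(\bm R; A(\bm S), f(\bm R)) < H(\bm R)$ whenever $H(\bm R \mid f(\bm R)) > 0$. An identical argument applies to $\bm Z_f = E_f(A(\bm P))$: by \eqref{eq:pitch_contour}, $\bm P = g_p(\bm F, \bm R)$ plays the role of $\bm S$, and since the random resampling operator $A$ contaminates rhythm on $\bm P$ by the same mechanism underlying \eqref{eq:resample1}, we obtain the analogue of Lemma~2 for $A(\bm P)$. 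Assembling these three observations, $I(\bm R; \bm Z_c, \bm Z_f, U, f(\bm R)) < H(\bm R)$ for every lossy $f$.

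Third, I contrapose. Suppose $H(\bm R \mid \bm Z_r) > 0$. Conditioning on $(\bm C, \bm F, \bm V)$, which are independent of $\bm R$ and may be treated as frozen randomness, renders $\bm Z_r = E_r(g_s(\bm C, \bm R, \bm F, \bm V))$ a deterministic function $f(\bm R)$ of $\bm R$; the hypothesis $H(\bm R \mid \bm Z_r) > 0$ forces $H(\bm R \mid f(\bm R)) > 0$ on a positive-measure set of $(\bm C, \bm F, \bm V)$. Applying Step~2 on this set then yields $I(\bm R; \bm Z_c, \bm Z_f, U, \bm Z_r) < H(\bm R)$, contradicting Step~1. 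Hence $H(\bm R \mid \bm Z_r) = 0$, and together with the bottleneck $H(\bm Z_r) = H(\bm R)$ from \eqref{eq:info_assump} we likewise get $H(\bm Z_r \mid \bm R) = 0$, yielding the required bijection $\bm Z_r = h_r(\bm R)$.

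The principal obstacle is the conditioning maneuver in the third step: since $\bm Z_r = E_r(\bm S)$ a priori depends on all of $(\bm C, \bm R, \bm F, \bm V)$, reducing it to a legitimate function of $\bm R$ so that Lemma~2 applies requires careful bookkeeping, and the local statement ``lossy on a positive-measure event'' must be promoted, via a conditional version of Lemma~2, to the global strict inequality needed to contradict Step~1. A secondary subtlety is that Lemma~2 is stated only for $A(\bm S)$; extending it to $A(\bm P)$ needs either the implicit assumption that $A$'s rhythm-contaminating property carries over verbatim to $\bm P$, or a separate argument showing $\bm Z_f$ cannot resolve residual rhythm ambiguity.
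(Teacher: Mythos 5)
Your proof takes essentially the same route as the paper's: invoke Lemma~1 for zero reconstruction loss, chain data-processing inequalities with Lemma~2 to show that $H(\bm R \mid \bm Z_r) > 0$ would force $H(\bm R \mid \hat{\bm S}) > 0$, and finish with the identity $H(\bm R \mid \bm Z_r) = H(\bm Z_r \mid \bm R) + H(\bm R) - H(\bm Z_r)$ together with $H(\bm Z_r) = H(\bm R)$ to also rule out $H(\bm Z_r \mid \bm R) > 0$. The two obstacles you flag at the end — that $\bm Z_r = E_r(\bm S)$ is not a priori a function of $\bm R$ alone as the $f(\cdot)$ in equation~\eqref{eq:resample3} requires, and that Lemma~2 is stated only for $A(\bm S)$ and not for $A(\bm P)$ — are present but unaddressed in the paper's own proof, which cites equation~\eqref{eq:resample3} with $\bm Z_r$ playing the role of $f(\bm R)$ and with $A(\bm P)$ silently appended to the conditioning set; your conditioning-on-$(\bm C,\bm F,\bm V)$ maneuver and the call for a $\bm P$-analogue of Lemma~2 are legitimate attempts to tighten those steps rather than deviations from the paper's argument.
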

\begin{proof}
We will prove this by contradiction. If
\begin{equation}
    \small
    H(\bm R | \bm Z_r) > 0,
\end{equation}
then we have
\begin{equation}
    \small
    \begin{aligned}
    H(\bm R | \hat{\bm S}) &\geq H(\bm R | \bm Z_r, \bm Z_c, \bm Z_f, \bm V ) \\
    &\geq H(\bm R | \bm Z_r, \bm Z_c, \bm Z_f) \\
    &\geq H(\bm R | \bm Z_r, \bm A(\bm S), \bm A(\bm P)) \\
    & > 0,
    \end{aligned}
    \label{eq:ambiguous_r}
\end{equation}
where the first and third lines are due to the data processing inequality; the second line is given by equation~\eqref{eq:speech_gen} and the independence assumption among the aspects; the last line is given by equation~\eqref{eq:resample3}. Equation~\eqref{eq:ambiguous_r} essentially means \e{\hat{\bm S}} cannot reconstruct \e{\bm R}, and thereby cannot reconstruct \e{\bm S}, which contradicts with the optimal loss being \e{0}.

Moreover, if
\begin{equation}
    \small
    H(\bm Z_r | \bm R) > 0,
\end{equation}
then
\begin{equation}
    \small 
    H(\bm R | \bm Z_r) = H(\bm Z_r | \bm R) + H(\bm R) - H(\bm Z_r) \geq H(\bm Z_r | \bm R) > 0,
\end{equation}
where the '\e{\geq}' inequality is from equation \eqref{eq:info_assump}. This will again lead to a contradiction.
\end{proof}

\begin{lemma}
Under the assumptions in theorem~\ref{thm:main}, and assuming \e{\bm Z_r} satisfies equation~\eqref{eq:disentagle}, in order to achieve the global minimum of equation~\eqref{eq:loss}, \e{\bm Z_c} must satisfy equation~\eqref{eq:disentagle}.
\label{lem:c}
\end{lemma}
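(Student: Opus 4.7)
The proof of Lemma~\ref{lem:c} will follow the same contradiction template as Lemma~\ref{lem:r}, with $\bm R$ replaced by $\bm C$ and the ``no other channel supplies this information'' step now leveraging the fact that $\bm Z_r$ has already been pinned down as a function of $\bm R$ alone. The plan is to assume $H(\bm C \mid \bm Z_c) > 0$, derive a strictly positive lower bound on $H(\bm C \mid \hat{\bm S})$, contradict the perfect reconstruction of the first lemma in this appendix, and then separately dispatch the case $H(\bm Z_c \mid \bm C) > 0$ using the dimension-matched entropy constraint of equation~\eqref{eq:info_assump}.

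The heart of the argument, and the step I expect to require the most care, is showing that the decoder's other three inputs $(\bm Z_r, \bm Z_f, \bm V)$ carry no information about $\bm C$. By Lemma~\ref{lem:r}, $\bm Z_r$ is a deterministic function of $\bm R$; by equation~\eqref{eq:speech_gen}, $\bm V$ is a deterministic function of $U$; and by equation~\eqref{eq:pitch_contour}, the pitch encoder output $\bm Z_f = E_f(A(g_p(\bm F, \bm R)))$ is a function of $(\bm F, \bm R)$ together with the random resampling noise, which is extrinsic and independent of the generative factors. Since $\bm C$, $\bm R$, $\bm F$, $U$ are assumed mutually independent, $\bm C$ is independent of $(\bm Z_r, \bm Z_f, \bm V)$, and therefore $H(\bm C \mid \bm Z_c, \bm Z_r, \bm Z_f, \bm V) = H(\bm C \mid \bm Z_c)$.

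With that in hand, the data-processing inequality gives $H(\bm C \mid \hat{\bm S}) \geq H(\bm C \mid \bm Z_c, \bm Z_r, \bm Z_f, \bm V) = H(\bm C \mid \bm Z_c) > 0$. But $\bm C$ is a deterministic function of $\bm S$ via the inverse of the one-to-one map $g_s$, so $H(\bm C \mid \bm S) = 0$; any residual entropy in $\hat{\bm S}$ about $\bm C$ forces $\hat{\bm S} \neq \bm S$ with positive probability, contradicting the zero reconstruction loss. To rule out the remaining case $H(\bm Z_c \mid \bm C) > 0$, I would invoke the identity $H(\bm C \mid \bm Z_c) = H(\bm Z_c \mid \bm C) + H(\bm C) - H(\bm Z_c) \geq H(\bm Z_c \mid \bm C) > 0$, where the inequality uses $H(\bm Z_c) = H(\bm C)$ from equation~\eqref{eq:info_assump}, reducing this case to the one already dispatched. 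The only genuine obstacle is the independence bookkeeping in the middle step; everything else is a routine adaptation of the preceding lemma.
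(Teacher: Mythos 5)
Your proof follows the same contradiction template as the paper's: assume $H(\bm C \mid \bm Z_c) > 0$, push a lower bound on $H(\bm C \mid \hat{\bm S})$ through the decoder's inputs, contradict zero reconstruction loss, and then handle $H(\bm Z_c \mid \bm C) > 0$ with the identity $H(\bm C \mid \bm Z_c) = H(\bm Z_c \mid \bm C) + H(\bm C) - H(\bm Z_c)$ and equation~\eqref{eq:info_assump}. That second case, and the overall structure, match the paper exactly.

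The one place where your write-up is weaker than the paper's is the central equality. You assert that because $\bm C$ is (marginally) independent of $(\bm Z_r, \bm Z_f, \bm V)$, it follows that $H(\bm C \mid \bm Z_c, \bm Z_r, \bm Z_f, \bm V) = H(\bm C \mid \bm Z_c)$. As a matter of probability theory this inference is a non-sequitur: unconditional independence of $\bm C$ from $\bm W$ does not imply $\bm C \perp \bm W \mid \bm Z_c$ (think of the XOR example, where two independent bits become fully dependent once you condition on their sum). What is actually needed is conditional independence given $\bm Z_c$. The paper's proof gets to the same endpoint but threads a longer chain: it first uses Lemma~\ref{lem:r} to replace $\bm Z_r$ by $f_r(\bm R)$, then invokes the factor-level independence assumption; it then applies the data-processing inequality to bound $H(\bm C \mid \bm Z_c, \bm Z_f)$ below by conditioning on the raw pitch-contour argument, invokes equation~\eqref{eq:pitch_contour} to identify that argument as $g_p(\bm F, \bm R)$, and only then applies independence again. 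That chain confines the independence reasoning to the generative factors $\bm R, \bm F$ rather than to the encoder outputs, and it makes the role of equation~\eqref{eq:pitch_contour} and the data-processing inequality explicit, both of which your single-step shortcut glosses over. You should rewrite the middle step to mirror that chain, or at minimum acknowledge that the equality you want is a conditional-independence claim, not a consequence of marginal independence.
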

\begin{proof}
We will prove this by contradiction. If
\begin{equation}
    \small
    H(\bm C | \bm Z_c) > 0,
\end{equation}
then we have
\begin{equation}
\small
\begin{aligned}
    H(\bm C | \hat{\bm S}) &\geq H(\bm C | \bm Z_r, \bm Z_c, \bm Z_f) \\
    &= H(\bm C | f_r(\bm R), \bm Z_c, \bm Z_f) \\
    &= H(\bm C | \bm Z_c, \bm Z_f) \\
    &\geq H(\bm C | \bm Z_c, \bm F) \\
    &= H(\bm C | \bm Z_c, g_p(\bm F, \bm R)) \\
    &= H(\bm C | \bm Z_c) > 0,
\end{aligned}
\label{eq:ambiguous_c}
\end{equation}
where the first line is similar to equation~\eqref{eq:ambiguous_r}; the second line is given by \e{\bm R} satisfying equation~\eqref{eq:disentagle}; the third and last lines are due to the independence assumption among the aspects; the fourth line is given by the data processing inequality; the fifth line is given by equation~\eqref{eq:pitch_contour}. Equation~\eqref{eq:ambiguous_c} essentially means \e{\hat{\bm S}} cannot reconstruct \e{\bm C}, and thereby cannot reconstruct \e{\bm S}, which contradicts with the optimal loss being \e{0}.

Moreover, if
\begin{equation}
    \small
    H(\bm Z_c | \bm C) > 0,
\end{equation}
then
\begin{equation}
    \small 
    H(\bm C | \bm Z_c) = H(\bm Z_c | \bm C) + H(\bm C) - H(\bm Z_c) \geq H(\bm Z_c | \bm C) > 0,
\end{equation}
where the '\e{\geq}' inequality is from equation \eqref{eq:info_assump}. This will again lead to a contradiction.
\end{proof}

\begin{lemma}
Under the assumptions in theorem~\ref{thm:main}, and assuming \e{\bm Z_r} and \e{\bm Z_c} satisfy equation~\eqref{eq:disentagle}, in order to achieve the global minimum of equation~\eqref{eq:loss}, \e{\bm Z_f} must satisfy equation~\eqref{eq:disentagle}.
\label{lem:f}
\end{lemma}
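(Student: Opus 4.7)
Following the template of Lemmas~\ref{lem:r} and \ref{lem:c}, the plan is to prove Lemma~\ref{lem:f} by contradiction: if $\bm Z_f$ fails to satisfy equation~\eqref{eq:disentagle}, then $\hat{\bm S}$ cannot recover $\bm F$, which (since $g_s$ is one-to-one in its arguments) prevents $\hat{\bm S}$ from matching $\bm S$, contradicting the global reconstruction loss being zero.

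As in the previous two lemmas, there are two failure modes for $\bm Z_f$. The case $H(\bm Z_f \mid \bm F) > 0$ reduces to $H(\bm F \mid \bm Z_f) > 0$ via
\begin{equation*}
H(\bm F \mid \bm Z_f) = H(\bm Z_f \mid \bm F) + H(\bm F) - H(\bm Z_f) \geq H(\bm Z_f \mid \bm F) > 0,
\end{equation*}
where the inequality invokes the bottleneck constraint $H(\bm Z_f) = H(\bm F)$ from equation~\eqref{eq:info_assump}, exactly as in Lemmas~\ref{lem:r} and \ref{lem:c}. So it suffices to handle the case $H(\bm F \mid \bm Z_f) > 0$, for which I would build the chain
\begin{equation*}
H(\bm F \mid \hat{\bm S}) \geq H(\bm F \mid \bm Z_r, \bm Z_c, \bm Z_f, \bm V) = H(\bm F \mid \bm R, \bm C, \bm Z_f, \bm V) = H(\bm F \mid \bm Z_f) > 0.
\end{equation*}
The first step is the data-processing inequality applied to $\hat{\bm S} = D(\bm Z_c, \bm Z_r, \bm Z_f, \bm V)$; the first equality substitutes $\bm Z_r = h_r(\bm R)$ and $\bm Z_c = h_c(\bm C)$ using Lemmas~\ref{lem:r} and \ref{lem:c}, with $h_r, h_c$ bijective; the second equality drops $\bm R, \bm C, \bm V$ by the mutual independence of the four aspects assumed in Theorem~\ref{thm:main}; and the strict inequality is the contradiction hypothesis.

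The main obstacle is the second equality. Marginally $\bm F \perp (\bm R, \bm C, \bm V)$, but $\bm Z_f = E_f(A(g_p(\bm F, \bm R)))$ depends jointly on $\bm F$ and $\bm R$, so conditioning on $\bm Z_f$ can in principle couple $\bm F$ with $\bm R$, and a priori one only has $H(\bm F \mid \bm R, \bm Z_f) \leq H(\bm F \mid \bm Z_f)$. This step parallels the independence invocation in Lemma~\ref{lem:c} and is plausibly the intended justification here as well; a more careful route would augment assumption~\eqref{eq:resample2} with an analog for $A(\bm P)$ and combine it with the bottleneck constraint $H(\bm Z_f) = H(\bm F)$ to argue that $\bm Z_f$ must act as a pure encoding of $\bm F$, under which conditioning on $\bm R$ is benign and the reduction goes through.
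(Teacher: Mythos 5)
Your argument is essentially identical to the paper's proof of Lemma~\ref{lem:f}: the same reduction of the $H(\bm Z_f \mid \bm F) > 0$ case to $H(\bm F \mid \bm Z_f) > 0$ via equation~\eqref{eq:info_assump}, and the same chain $H(\bm F \mid \hat{\bm S}) \geq H(\bm F \mid \bm Z_r, \bm Z_c, \bm Z_f) = H(\bm F \mid \bm Z_f) > 0$ obtained by data processing, by substituting $\bm Z_r = h_r(\bm R)$ and $\bm Z_c = h_c(\bm C)$ from Lemmas~\ref{lem:r} and~\ref{lem:c}, and by dropping $\bm R$, $\bm C$ citing independence. (Incidentally, the paper's display has a copy-paste typo from Lemma~\ref{lem:c}, writing $H(\bm C \mid \cdot)$ in the last two lines where $H(\bm F \mid \cdot)$ is intended, which you correctly have; your retaining $\bm V$ in the conditioning through the first step is also immaterial.)

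The caveat you raise about the final equality is well-placed and, as you suspect, is not handled more rigorously in the paper, which only cites ``the independence assumption among the aspects.'' Marginal independence $\bm F \perp \bm R$ does not give conditional independence given $\bm Z_f$, because $\bm Z_f = E_f(A(g_p(\bm F, \bm R)))$ is a common descendant of $\bm F$ and $\bm R$, and conditioning on a common effect can induce dependence between independent causes. Concretely, if $E_f$ packed an XOR-like mixture of $\bm F$ with residual rhythm surviving in $A(\bm P)$, one could have $H(\bm Z_f) = H(\bm F)$ and $H(\bm F \mid \bm Z_f) > 0$ yet $H(\bm F \mid \bm R, \bm Z_f) = 0$, so the decoder could still reconstruct $\bm F$ and the contradiction would evaporate even though $\bm Z_f$ violates equation~\eqref{eq:disentagle}. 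Your suggested remedy---an analog of equation~\eqref{eq:resample2} for $A(\bm P)$, combined with the bottleneck constraint to force $\bm Z_f$ to be an encoding of $\bm F$ alone---is the right direction and makes explicit an assumption the paper relies on implicitly. The same concern applies to the ``drop by independence'' steps in the paper's proof of Lemma~\ref{lem:c}, where $\bm Z_c$ and $\bm Z_f$ sit in the conditioning while $h_r(\bm R)$ is dropped.
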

\begin{proof}
We will prove this by contradiction. If
\begin{equation}
    \small
    H(\bm F | \bm Z_f) > 0,
\end{equation}
then we have
\begin{equation}
\small
\begin{aligned}
    H(\bm F | \hat{\bm S}) &\geq H(\bm F | \bm Z_r, \bm Z_c, \bm Z_f) \\
    &= H(\bm C | f_r(\bm R), f_c(\bm C), \bm Z_f) \\
    &= H(\bm C | \bm Z_f) > 0,
\end{aligned}
\label{eq:ambiguous_f}
\end{equation}
where the first line is similar to equation~\eqref{eq:ambiguous_r}; the second line is given by \e{\bm R} and \e{\bm C} satisfying equation~\eqref{eq:disentagle}; the third is due to the independence assumption among the aspects. Equation~\eqref{eq:ambiguous_f} essentially means \e{\hat{\bm S}} cannot reconstruct \e{\bm F}, and thereby cannot reconstruct \e{\bm S}, which contradicts with the optimal loss being \e{0}.

Moreover, if
\begin{equation}
    \small
    H(\bm Z_f | \bm F) > 0,
\end{equation}
then
\begin{equation}
    \small 
    H(\bm F | \bm Z_f) = H(\bm Z_f | \bm F) + H(\bm F) - H(\bm Z_f) \geq H(\bm Z_f | \bm F) > 0.
\end{equation}
where the '\e{\geq}' inequality is from equation \eqref{eq:info_assump}. This will again lead to a contradiction.
\end{proof}
Theorem~\ref{thm:main} can be implied by combining lemmas~\ref{lem:r}, \ref{lem:c} and \ref{lem:f}.

\section{Additional Implementation Details}
\label{sec:implementation_details}

\subsection{Input Features}

The input and output spectrograms are 80-dimensional mel-spectrograms computed using 64 ms frame length and 16 ms frame hop. For each speaker, the input pitch contour is first extracted using a pitch tracker \cite{yamamoto2019r9y9}, and then normalized by its mean and four times its standard deviation. This operation roughly limits the pitch contour to be within the range of 0-1. After that, we quantize the range 0-1 into 256 bins and turn it into one-hot representations. Finally, we add another bin to represent unvoiced frames producing 257 one-hot encoded feature \e{\bm P}. 

\subsection{Information Bottleneck Implementations}
\label{subsec:info_bottleneck}
As discussed, \algname adopts two methods to restrict the information flow. The first is random resampling, and the second is the constraints on the physical dimensions, which include the downsampling operations in frequency and time dimensions.

The random resampling is implemented as follows. First, the input signal is divided into segments, whose length is randomly uniformly drawn from 19 frames to 32 frames \cite{polyak2019attention}. Each segment is resampled using linear interpolation with a resampling factor randomly drawn from 0.5 (compression by half) to 1.5 (stretch). 
For each input utterance, the random sampling operations at the input layers of the content encoder and pitch encoder share the same random sampling factors. We find that by having the same random sampling factors, we can reduce the remaining rhythm information after the random sampling, and thus achieving better disentanglement.

We follow the downsampling implementation in \textsc{AutoVC}. Suppose the downsampling factor is \e{k} and we use zero-based indexing of the frames. For the forward direction output of the bidirectional-LSTM, \e{t=kn+k-1}, \e{n \in \{0, 1, 2 \cdots\}} are sampled; for the backward direction, \e{t= kn} are sampled. In this way, we can ensure the frames at both ends are covered by at least one forward code and one backward code.

\subsection{Converting Pitch}

During training, the input speech to the content encoder and the input pitch contour to the pitch encoder are always aligned (due to the shared random sampling factors), \emph{i.e.} they share the same (contaminated) rhythm information \e{A(\bm R)}. During pitch conversion, however, such alignment no longer exists, because the pitch contour is replaced with that of another utterance. To restore the temporal alignment, before we perform the pitch conversion, we first perform a rhythm-only conversion to the new pitch contour, where the conversion target is the input speech to the content encoder.

The rhythm-only conversion on pitch contour is essentially the same as the rhythm-only conversion on speech, except that we need to use a mini \algname variant that operates on pitch contour, not speech. Specifically, there are two major differences between the variant and the original \algnamens. First, the variant comes with only two encoders, the rhythm encoder and the pitch encoder, whose inputs are spectrograms and the corresponding pitch contours. The content encoder is removed because there is no content information in pitch contour. Second, rather than reconstructing speech, the decoder reconstructs pitch contour from the outputs of the encoders. The output dimension of the decoder at each time is the one-hot encoding dimension of the pitch contour (257), and the cross-entropy loss is applied. The hyperparameter settings are the same as in the original \algnamens. Following the same argument as in \algnamens, it can be shown that this variant can disentangle the pitch and rhythm information of pitch contour, and thus can perform the rhythm-only conversion.

\subsection{General Guide on Tuning the Bottlenecks}
\label{app:tuning}

Although tuning the information bottleneck dimensions is the most difficult part of getting \algname to work properly, there are some straightforward guidelines on how to choose the correct physical dimension of each code. The basic idea is that removing one of the four codes should reproduce the results in figure~\ref{fig:zero}.

Specifically, when the rhythm code is set to zero, the output should be almost blank, as shown in the top-left spectrogram in figure~\ref{fig:zero}. If the output still preserves significant speech energy, it means that the rhythm code dimension is too small. Consider increasing the dimension. Throughout this section, by increasing the dimension, we refer to two operations. The first operation is to increase the channel dimension of the encoder output. The second operation is to increase the sampling rate of the down-sampled code. Accordingly, by decreasing the dimension, we refer to the two opposite operations.

When the content code is set to zero, the output should become a set of slots with uninformative spectral shapes, as shown in the top-right spectrogram in figure~\ref{fig:zero}. If the output preserves \emph{the same} formant structure as the input speech, it means that the content code dimension is too small, and that the rhythm code dimension is too large (the case where the rhythm bottleneck is too wide has been discussed in section~\ref{subsec:vary_bottleneck}). In this case, consider increasing the content code dimension, and decreasing the rhythm code dimension. Please note that the key is to compare the formant structure with that of the input speech. In some cases, removing the content would produce a speech-like spectrogram with clear harmonic and formant structures, instead of the aforementioned empty slots of uninformative spectral shapes. However, as long as the formant structure is drastically different from the input speech, the bottleneck setting is appropriate.

When the pitch code is set to zero, the output should become either a voiced spectrogram with a constant pitch and harmonic structure, as shown is the bottom-left spectrogram in figure~\ref{fig:zero}, or an unvoiced spectrogram with no harmonic structure at all. If the output does not fall in either case, \emph{i.e.} the output preserves the same pitch or voiced/unvoiced states as the input speech, it means that either the rhythm code or the content code is too wide (both cases have been discussed in section~\ref{subsec:vary_bottleneck}). If this happens, determine which case it is by setting the content code to zero, and then make adjustments accordingly. It is also worth mentioning that it does not harm to set a relatively large bottleneck dimension for pitch, because the information conveyed by the pitch contour is already very limited.

Finally, if the speaker identity is changed to another speaker, the output should sound like the target speaker. If it does not sound like the target speaker, it means either the rhythm code or the content code is too wide. Follow the aforementioned procedures to identify the problem. However, if there are no anomalies in the aforementioned diagnosis, they may be both too wide. Try decreasing them simultaneously. Conversely, if the converted speech is of very poor quality, it implies that both the rhythm code and the content code are too narrow. Try increasing them simultaneously.

\subsection{Test Token Selection}
\label{app:test_select}

The samples for subjective evaluations on pitch are hand-picked by authors. Authors listen to all the pairs in the test set and find the parallel pairs that are perceptibly different in pitch, e.g. rise vs fall tones, or high vs low tones, in at least one word. We then sub-select 20 pairs that maintain speaker diversity. For rhythm, we identify top 40 parallel pairs with greatest differences in time length, and then sub-select 20 pairs that maintain speaker diversity. For timbre, we simply randomly pick pairs from different speakers. Please note that the selection is based on the original speech only. They are not based on any conversion results.

\section{Additional Experiment Results}
\label{sec:exper_add}

\begin{figure}[t]
\centering
\includegraphics[width=1\linewidth]{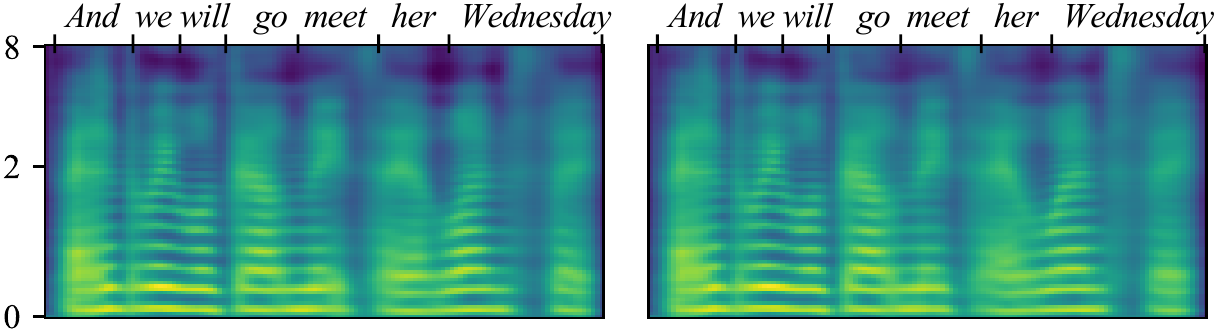}
\small
\vspace{-0.2in}
\caption{\small{Reconstructed speech produced by \textsc{AutoVC} with a random resampling module. The ground truth speech for the left column is in figure~\ref{fig:mismatched}.} The word boundaries and labels are copied from that of the ground truth.}
\label{fig:rr_recon}
\end{figure}

\subsection{Does Random Resampling Remove All Rhythm?}

In figure~\ref{fig:model} and section~\ref{subsec:why}, we assume that the random resampling only contaminates rhythm information, but does not completely remove it. To verify this assumption, we train a single autoencoder for speech, where the encoder and decoder are the \algname content encoder and decoder respectively. If random resampling only removes a portion of the rhythm information, the output reconstruction can still roughly temporally aligned with the ground truth speech. Otherwise, the reconstruction will be completely misaligned.

Figure~\ref{fig:rr_recon} shows two reconstruction results with different randomly drawn resampling factors, whose ground truth utterances are both the top-left panel of figure~\ref{fig:mismatched}. To assist our judgment of the alignment, we directly copy the word boundaries and labels from the ground truth. As can be observed, the two reconstructions are very alike, even though their random resampling factors are different. Furthermore, both reconstructions can recover the ground truth speech decently, only with some minor blurring, which verifies that random resampling performs an \emph{incomplete disentanglement} of rhythm. In other words, \algname shows that we can build a complete disentanglement mechanism even if we only have a partial disentanglement technique.

\subsection{Do Rhythm Labels Exist?}

\begin{figure}[t]
\centering
\includegraphics[width=1\linewidth]{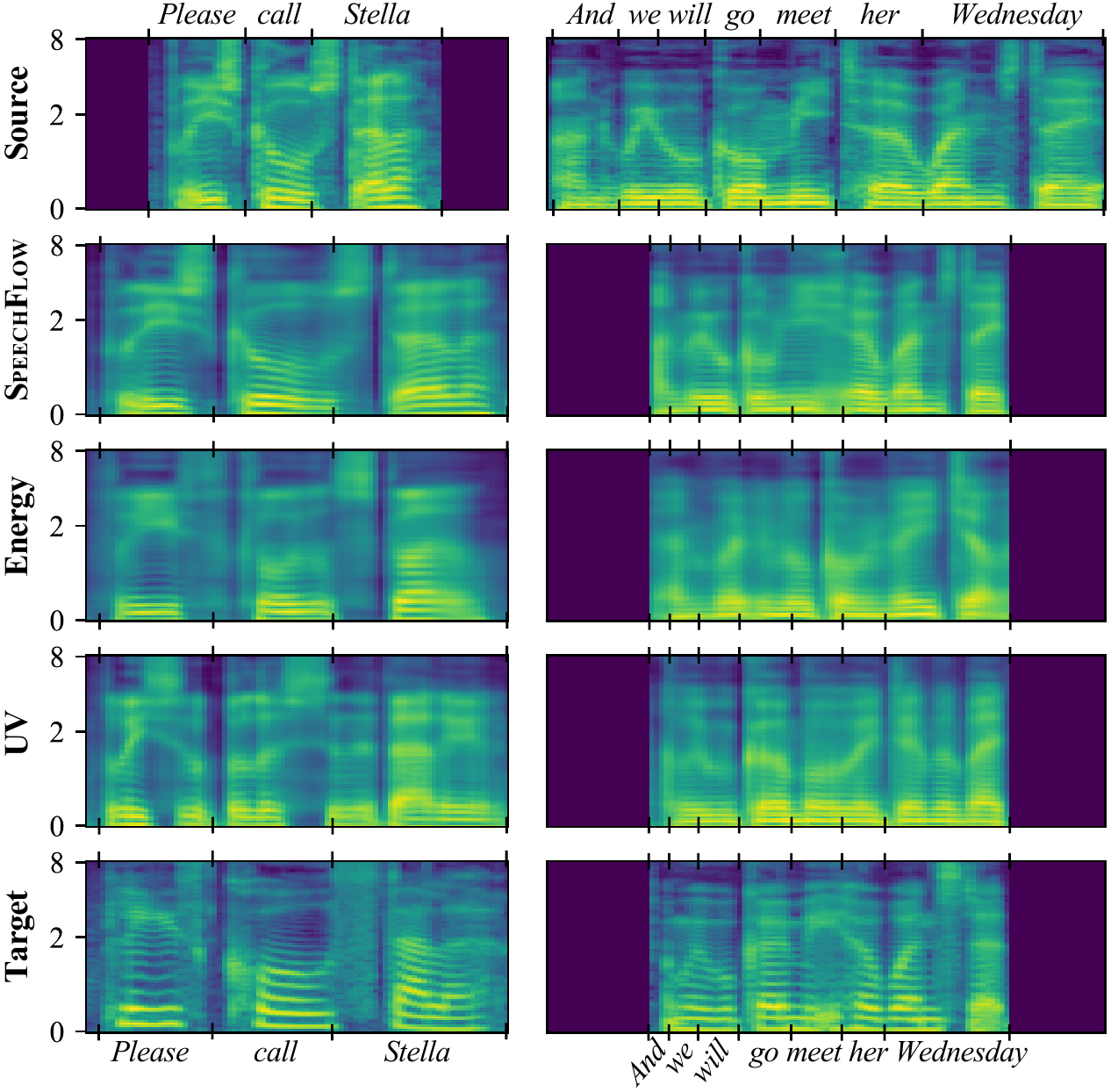}
\small
\caption{\small{Rhythm-only conversion using the rhythm feature in \algname (second row) compared with that using candidate rhythm features, including short-time energy (third row) and UV label (fourth row).}}
\label{fig:rhythm_feature}
\end{figure}

In section~\ref{sec:intro}, we have discussed that one motivation for designing \algname is that rhythm labels are not directly available. If they were, the rhythm aspect could be disentangled in much simpler ways. In this section, we would like to explore if there exist any rhythm labels.

We have identified two promising candidate rhythm labels, short-time energy and unvoiced-voiced (UV) label. The short-time energy is computed by taking the moving average of the squared waveform. The UV labels are derived from pitch contour, which equals one if the corresponding frame is voiced, and zero otherwise. Both candidates are informative of the syllable boundaries, and neither contains other information such as content and pitch. To test if these candidates are equally effective as the \algname rhythm encoder, we train two variants of \algnamens, one replacing the rhythm code with the short-time energy, and the other with the UV label. We then perform the rhythm-only conversion using \algname and the two variants, by replacing the rhythm code/label with that of the target speech. If the candidates are effective, the corresponding rhythm-only conversions should be successful.

Figure~\ref{fig:rhythm_feature} shows the rhythm-only conversion results on two utterances, \textit{`Please call Stella'} and \textit{`And we will go meet her Wednesday'}, produced by these three algorithms. At first glance, all the conversion results are temporally aligned with the target speech, which seems to suggest that the rhythm aspect has been successfully converted. However, a close inspection into the formant structure of the candidate conversion results reveals that the content within each syllable is completely incorrect.

With the `fill in the blank' perspective discussed in section~\ref{subsec:mismatched}, we can better understand why the candidate rhythm labels fail. Both candidates can accurately provide the temporal information of the syllable boundaries, and thus the blanks are correctly located in time. However, the candidates fail to provide the anchor information of what to fill in each blank, and that is why the conversion algorithms put the wrong content in the blanks. In summary, obtaining a rhythm label is a nontrivial task, because the rhythm label should contain some anchor information to associate each syllable with the correct content, while excluding excessive content to ensure content disentanglement. \algnamens, with a triple information bottleneck design, manages to obtain such an effective rhythm code, which contributes to a successful rhythm conversion.

\subsection{Additional Conversion Spectrograms}

\begin{figure*}[t]
\centering
\includegraphics[width=0.6\linewidth]{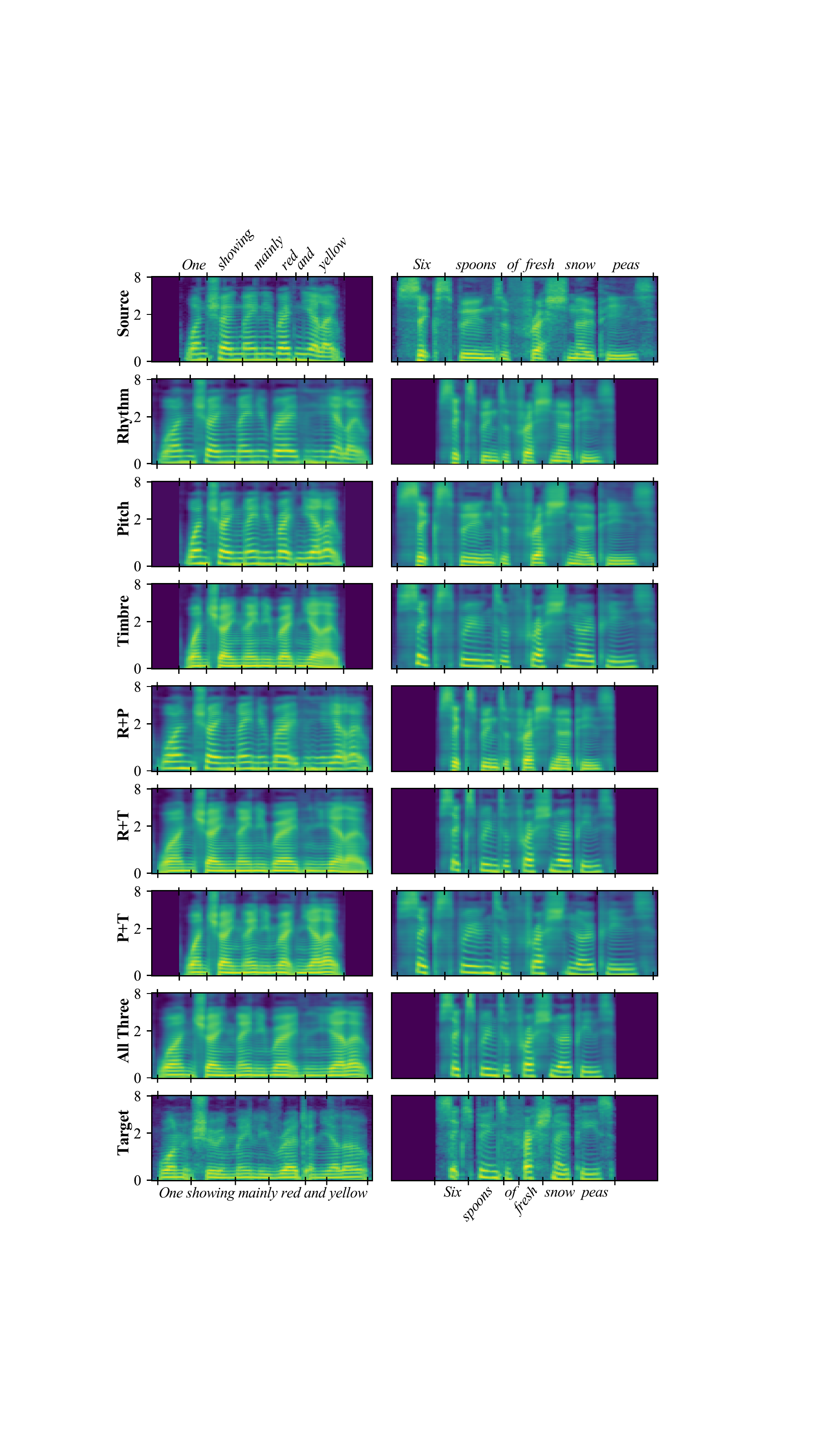}
\small
\caption{\small{Spectorgrams of aspect-specific conversion results on two utterances, \textit{`One showing mainly red and yellow'} (left) and \textit{`Six spoons of fresh snow peas'} (right). R+P denotes rhythm+pitch conversion; R+T denotes rhythm+timbre conversion; P+T denotes pitch+timbre conversion.}}
\label{fig:convert_plots}
\end{figure*}

In figure~\ref{fig:convert_plots}, we augment the spectrogram visualization results in section~\ref{subsec:main_conversion} (figure~\ref{fig:main_spect}) with two additional utterances, \textit{`One showing mainly red and yellow'} and \textit{`Six spoons of fresh snow peas'}, and with all the conversion types (not just the single-aspect conversions) displayed. Consistent with the results shown in section~\ref{subsec:main_conversion}, these additional results show that \algname can successfully convert the intended aspects to match those of the target speech, while keeping the remaining aspects matching the source speech. Remarkably, when all three aspects are converted, the converted speech becomes very similar to the target speech.

\end{document}